\newcommand{\bmax}{\mathop{\mathrm{max}}}
\newcommand{\bmin}{\mathop{\mathrm{min}}}
\newcommand{\bargmin}{\mathop{\mathrm{arg\ min}}}
\numberwithin{equation}{section}
\numberwithin{theorem}{section}
\numberwithin{corollary}{section}
\numberwithin{definition}{section}
\begin{document}

\title{\LARGE Robust Covariance Estimation for High-dimensional Compositional Data  with Application to Microbial Communities Analysis}

	\author{Yong He\thanks{Institute for Financial Studies, Shandong University, Jinan, China; Email:{\tt heyong@sdu.edu.cn}.},~~Pengfei Liu\thanks{School of Mathematics and Statistics, Jiangsu Normal University, Xuzhou,  China; Email:{\tt liupengfei@jsnu.edu.cn}.},~~Xinsheng Zhang\thanks{ School of Management, Fudan University, Shanghai, China; Email:{\tt xszhang@fudan.edu.cn}.},~~Wang Zhou\thanks{Department of Statistics and Applied Probability, National University of Singapore, Singapore; Email:{\tt stazw@nus.edu.sg }.}}	
	\date{}	
	\maketitle
Microbial communities analysis is drawing growing attention due to the rapid development of high-throughput sequencing techniques nowadays. The observed data has the following typical characteristics: it is high-dimensional,     compositional (lying in a simplex) and even would be leptokurtic and highly skewed due to the existence of overly abundant taxa, which makes the conventional correlation analysis infeasible to study the co-occurrence and co-exclusion relationship between microbial taxa. In this article, we address the challenges of covariance estimation for this kind of data. Assuming the basis covariance matrix lying in a well-recognized class of sparse covariance matrices, we adopt a proxy matrix known as centered log-ratio covariance matrix in the literature, which is approximately indistinguishable from the real basis covariance matrix as the dimensionality tends to infinity. We construct a Median-of-Means (MOM) estimator for the centered log-ratio covariance matrix and propose a thresholding procedure that is adaptive to the variability of individual entries. By imposing a much weaker finite fourth moment condition compared with the sub-Gaussianity condition in the literature, we derive the optimal rate of convergence under the spectral norm. In addition, we also provide theoretical guarantee on  support recovery.  The adaptive thresholding procedure of the MOM estimator is easy to implement and gains robustness when outliers or heavy-tailedness exist. Thorough simulation studies are conducted to show the advantages of the proposed procedure over some state-of-the-arts methods. At last, we apply the proposed method to analyze a microbiome dataset in human gut. The R script for implementing the method is available at \url{https://github.com/heyongstat/RCEC}.

\vspace{2em}

\textbf{Keyword:} Adaptive thresholding; Compositional data; Median of means; Microbiome;  Robust inference; Sparse covariance matrix.	
	
\section{Introduction}

Covariance matrix estimation plays an important role in many areas of statistical analysis such as Principle Component Analysis (PCA), Linear Discriminant Analysis (LDA) and Gaussian Graphical Models (GGM). Nowadays, rapid development in computer technology floods us with high-dimensional dataset such as genomic data and brain imaging data, and the sample size is very small relative to the dimensionality. It is well-known that the sample covariance matrix performs poorly in high dimensions.
In the last decades, a fast growing literature on estimation of high-dimensional covariance matrix arises under structural assumptions or equivalent sparsity.  A common sparsity assumption in the literature is that all rows/columns of the covariance matrix lies in a sufficiently small $\ell_q$-ball around zero.  \cite{bickel2008covariance} proposed thresholding of the sample covariance matrix and \cite{rothman2009generalized} considered thresholding of the sample covariance matrix with more general thresholding functions.  \cite{cai2011adaptive} proposed an adaptive counterpart which achieves minimax optimality for sub-Gaussian variables with the $\ell_q$-ball sparsity assumption. Afterwards, more complicated covariance structures are considered, see for example \cite{cai2012adaptive,fan2013large,cai2016minimax,Fan2018Large}.  This research area is very active and the references listed here is only illustrative.

Compositional data arise in a wide range of applications. One typical type of compositional data is geochemical compositions of specimens such as rocks, sediments or soils.  The sum-to-one constraint makes the analysis of geochemical data difficult \citep{Chayes1960On} and the elements' distributions are typically skewed and it's often the case that there exist outliers or atypical observations \citep{Reimann2000Normal,Scealy2015Robust}.
  Another typical type of compositional data  is microbiome data and the current article is motivated by its metagenomic analysis. High-throughput sequencing techniques, such as targeted amplicon-based sequencing (TAS) and metagenomic profiling, provide large-scale genomic survey data of microbial communities in their natural habitats. However, these sequencing-based microbiome studies only provide us a relative measure of the abundances of community components rather than an absolute one. In fact, the microbial abundance is usually measured in read counts, which are
not directly comparable across samples due to the uneven total sequence counts of samples.
Therefore, the data are often normalized to relative abundances and sum to 1 for all microbes in a sample. In addition, the widespread outliers and high skewness have frequently been observed in sequencing samples \citep{Chen2017An,gao2019novel}.  The microbiome data fall into a class of high-dimensional leptokurtic and highly skewed compositional data with outliers that we focus on in this article.

 In metagenomic studies, it is of interest to understand the co-occurrence and co-exclusion relationship between human microbial taxa, which may shed light on the potential cause of complex diseases such as obesity, atherosclerosis, and Crohan's disease. Due to the unit-sum constraint of compositional data, conventional correlation analysis from the raw proportions fails to provide valid inference on the underlying biological mechanism. Thus it has been a long-standing question to model, estimate, and interpret the covariance structure for compositional data appropriately. As a pioneer work, \cite{aitchison1982statistical}  introduced several equivalent matrix specifications of covariance structures via the log-ratios of components. However, it's still unclear how to impose sparse structure in their models in high-dimensions due to a lack of direct covariances interpretation in these models. \cite{Friedman2012Inferring} focused on the correlations between latent variables based on log-ratio transformation of compositional data and proposed a method called SparCC under sparse assumption. \cite{Fang2015CCLasso} proposed a method called CCLasso based on least squares with $\ell_1$ penalty to infer the correlation network for latent variables of compositional data.
 \cite{Jiang2015Investigating} proposed a regularized estimation method for the basis covariance called REBACCA, which aims to  estimate the correlations between pairs of basis abundance with the log ratio transformation of metagenomic compositional data.
 \cite{cao2019large} introduced a COmposition-Adjusted Thresholding (COAT)  method to estimate the basis covariance matrix for high-dimensional compositional data, which has good interpretation for sparse structures.
The work of \cite{cao2019large} only derived the asymptotic convergence rate for data from a distribution with sub-Gaussian tails. The sub-Gaussianity assumption is an idealization of the complex random real world. Although the assumption facilitates the theoretical analysis, it is not realistic in practical applications as the collected modern data are often of
 low quality \citep{Qiang2018Adaptive}.  The existence of high skewness and outliers in microbiome data even makes the sub-Gaussianity assumptions seem more questionable.
 Figure \ref{fig:boxplot} shows the boxplots of estimation errors under matrix spectral norm  over 100 replications by  COAT, CCLasso, SparCC and REBACCA when synthetic data are generated from contaminated multivariate $t$ distribution. The detailed data generating setting is described in  Case 4  in Section \ref{sec:simulation}. From Figure  \ref{fig:boxplot}, we can see that the COAT, CCLasso, SparCC and REBACCA all perform unsatisfactorily when the underlying data are highly skewed and heavy-tailed, which is often the case for microbiome compositional data.
 Although there is  a lot of  literature on robust covariance matrix estimation in the presence of heavy-tailed data in high dimensions, such as \cite{xue2012regularized,liu2012high,He2017High,avella2018robust,Fan2018Large,He2018Variable,He2019Robust}, none of these work  considered the unit-sum constraint of compositional data.
Thus we are motivated to seek new robust procedures which can achieve the same minimax optimality when the data are high-dimensional, compositional, leptokurtic and highly skewed.


\begin{figure}[h]
  \centering
    \includegraphics[width=9cm, height=6.3cm]{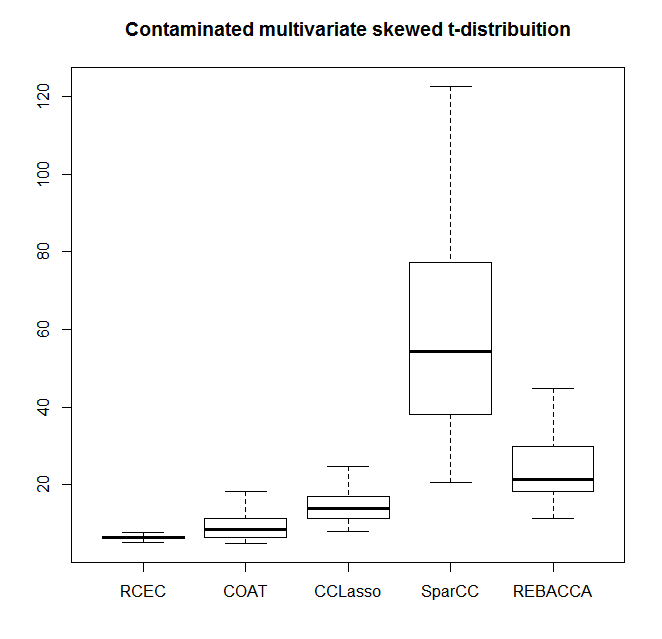}\\
 \caption{Boxplot of estimation errors under matrix spectral norm  over 100 replications by RCEC, COAT, CCLasso, SparCC and REBACCA for Case 4 in which synthetic data are generated from contaminated multivariate $t$ distribution, sample size $n=100$, dimensionality $p=100$.  }\label{fig:boxplot}
 \end{figure}

In this article, we assume the basis covariance matrix lies in a class of sparse covariance matrices $\cU(q,s_0(p),M)$ in (\ref{equ:classofcovariance}) and adopt the centered log-ratio covariance matrix as a proxy, which is approximately indistinguishable from the real basis covariance matrix in high-dimensions thanks to the unit-sum constraint of compositional data. We first construct a Median Of Means (MOM) estimator for the proxy matrix, which particularly fits to heavy-tailed data \citep{lerasle2011robust,Bubeck2013Bandits}. The final estimator is obtained by adaptively thresholding the MOM estimator for the centered log-ratio covariance matrix. We derive the same minimax convergence rate of the proposed estimator  as that in \cite{cai2011adaptive}, but we only assume finite fourth moments constraint.
 Simulation studies show that the proposed estimator outperforms some state-of-the-art estimators which ignore the heavy-tailedness and skewness of microbiome compositional data.
 Let's move back to Figure \ref{fig:boxplot}, the proposed method RCEC, abbreviated for \textbf{R}obust \textbf{C}ovariance \textbf{E}stimator for \textbf{C}ompositional data, outperforms the COAT, CCLasso, SparCC and REBACCA by a large margin in terms of estimation errors under the spectral matrix norm in the heavy-tailed and highly-skewed setting.   We also illustrate the method with a microbiome dataset, which helps us understand the heuristic dependence structure among bacteria taxa in the human gut.

We introduce the notation adopted throughout the paper. For any vector $\bmu=(\mu_1,\ldots,\mu_p)^\top \in \RR^p$, let $\|\bmu\|_2=(\sum_{i=1}^p\mu_i^2)^{1/2}$, $\|\bmu\|_\infty=\max_i|\mu_i|$. For a real number $a$, denote  $\lfloor a\rfloor$ as the largest integer smaller than or equal to $a$ and $(a)_{+}=\max\{a,0\}$. Let $I(\cdot)$ be the indicator function. For a matrix $\Ab=(a_{ij})$,  let $\Ab^\top$ be the transpose of $\Ab$, ${\rm Tr}(\Ab)$  the trace of $\Ab$, $\lambda_{\max}(\Ab)$ and $\lambda_{\min}(\Ab)$  the largest and smallest eigenvalue of a nonnegative definitive matrix $\Ab$ respectively and $\text{diag}(\Ab)$ be a vector composed of the diagonal elements of $\Ab$. Further note by $\|\Ab\|_1$, $\|\Ab\|_2$, $\|\Ab\|_F$ and $\|\Ab\|_{\max}$ the respective matrix $\ell_1$ norm, spectral norm, Frobenius norm and element-wise $\ell_\infty$ norm, i.e., $\|\Ab\|_1=\max_j\sum_i|a_{ij}|$, $\|\Ab\|_2=\sqrt{\lambda_{\max}(\Ab^\top\Ab)}$, $\|\Ab\|_F=\sqrt{\sum_{i,j}a_{ij}^2}$ and $\|\Ab\|_{\max}=\max_{i,j}|a_{ij}|$. Let $\Ab\succ 0$ denote that $\Ab$ is positive definite. For a set $\cH$, let $\text{Card}(\cH)$ be the cardinality of the set and $\one_p$ be a vector with all elements equal to 1 and $\zero$ be a vector with all elements equal to 0.

The rest of the paper is organized as follows. In Section 2 we introduce the class of sparse covariance matrices and review the basic relationship between the basis covariance matrix and the log-ratio covariance matrix. Section 3 introduces the robust covariance matrix estimator. In Section 4 we investigate the theoretical properties of the proposed estimator.  Section 5 presents the results of  thorough simulation studies. A real application to human gut microbiome data is given in Section 6. We discuss possible future research directions in Section 7 and all the detailed proofs of theorems are relegated to the Appendix.

\section{Preliminaries}
In this section we introduce some preliminary results on the compositional data analysis.
Let $\bZ=(Z_1,\ldots,Z_p)^\top$ with $Z_j>0$ for all $j$ be the latent basis variables. The observable composition variables $\bX=(X_1,\ldots,X_p)^\top$ are generated via normalizing the basis (latent) variables $\bZ$, i.e.,
\[
X_j=\frac{Z_j}{\sum_{i=1}^pZ_i}, \ \ \ j=1,\ldots,p.
\]

 It is infeasible to estimate the covariance of $\bZ$ owing to the apparent lack of identifiability. However, the basis covariance $\bOmega_0=(\omega_{ij}^0)$, defined as the covariance matrix of $Y_j=\log Z_j$, is approximately identifiable as long as it belongs to a class of large sparse covariance matrices \citep{cao2019large}.
In this article, the basis covariance matrix $\bOmega_0=(\omega_{ij}^0)$ is the parameter of interest. \cite{aitchison2003statistical} introduced the variation matrix $\Tb_0=(t_{ij}^0)$ defined by
\[
t_{ij}^0=\text{Var}(\log(X_i/X_j))=\text{Var}(\log Z_i-\log Z_j)=\text{Var}(Y_i-Y_j)=\omega_{ii}^0+\omega_{jj}^0-2\omega_{ij}^0,
\]
or in matrix form,
\[
\Tb_0=\bomega_0\one^\top+\one\bomega_0^\top-2\bOmega_0,
\]
where $\bomega_0=\text{diag}(\bOmega_0)$ and $\one=(1,\ldots,1)^\top$. The basis covariance matrix $\bOmega_0$ is unidentifiable from the above decomposition  as $\bomega_0\one^\top+\one\bomega_0^\top$ and $\bOmega_0$ are in general not orthogonal to each other.

The centered log-ratio covariance matrix $\bGamma_0=(\gamma_{ij}^0)$ is defined by
\[
\gamma_{ij}^0=\text{Cov}\Big\{\log(X_i/g(\bX)), \log(X_j/g(\bX))\Big\},
\]
where $g(\bx)=(\prod_{j=1}^px_j)^{1/p}$.

Thus for the variation matrix $\Tb_0$, we can similarly write
\[
\begin{split}
t_{ij}^0&=\text{Var}(\log(X_i)/\log(X_j))=\text{Var}\Big\{\log(X_i/g(\bX))-\log(X_j/g(\bX))\Big\}\\
&=\text{Var}\Big\{\log(X_i/g(\bX))\Big\}+\text{Var}\Big\{\log(X_j/g(\bX))\Big\}-2\text{Cov}\Big\{\log(X_i/g(\bX)), \log(X_j/g(\bX))\Big\}\\
&=\gamma_{ii}^0+\gamma_{jj}^0-2\gamma_{ij}^0,
\end{split}
\]
or in matrix form,
\begin{equation}\label{equ:T0Gamma0}
\Tb_0=\bgamma_0\one^\top+\one\bgamma_0^\top-2\bGamma_0,
\end{equation}
where $\bgamma_0=\text{diag}(\bGamma_0)$.

\begin{lemma}\label{lemma:approximation}
The components $\bgamma_0\one^\top+\one\bgamma_0^\top$ and $\bGamma_0$ in the decomposition (\ref{equ:T0Gamma0}) are orthogonal to each other. In addition, for the covariance matrices $\bOmega_0$ and $\bGamma_0$, we have
\[
\|\bOmega_0-\bGamma_0\|_{\max}\leq 3p^{-1}\|\bOmega_0\|_1.
\]

\end{lemma}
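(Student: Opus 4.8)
The plan is to first reduce the statement to a single clean matrix identity, namely $\bGamma_0 = \Pb\bOmega_0\Pb$, where $\Pb = \Ib - p^{-1}\one\one^\top$ is the centering projection. To establish this, write $\bY = (Y_1,\ldots,Y_p)^\top$ with $Y_j = \log Z_j$, and let $\bW$ be the vector of centered log-ratios whose $j$-th entry is $\log(X_j/g(\bX))$. Since $\log X_j = Y_j - \log(\sum_i Z_i)$, the common normalizer $\log(\sum_i Z_i)$ is annihilated by the centering step, so that $W_j = Y_j - p^{-1}\sum_k Y_k$, i.e. $\bW = \Pb\bY$. Because $\bGamma_0 = \text{Cov}(\bW)$, $\bOmega_0 = \text{Cov}(\bY)$, and $\Pb$ is symmetric and idempotent, this gives $\bGamma_0 = \Pb\,\text{Cov}(\bY)\,\Pb^\top = \Pb\bOmega_0\Pb$. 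This identity, together with the defining relation $\Pb\one = \zero$, drives both claims.

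For the orthogonality assertion I interpret orthogonality in the Frobenius inner product $\langle\Ab,\Bb\rangle = \text{Tr}(\Ab^\top\Bb)$, which is the natural sense in which the earlier decomposition $\bomega_0\one^\top+\one\bomega_0^\top$ versus $\bOmega_0$ fails to be orthogonal. The key observation is that $\one^\top\bW = \sum_j W_j = 0$ holds identically, so $\bGamma_0\one = \text{Cov}(\bW)\one = \zero$; equivalently this follows from $\bGamma_0 = \Pb\bOmega_0\Pb$ and $\Pb\one = \zero$. Since $\bgamma_0\one^\top + \one\bgamma_0^\top$ is symmetric, I would compute $\langle\bgamma_0\one^\top + \one\bgamma_0^\top,\, \bGamma_0\rangle = \text{Tr}[(\bgamma_0\one^\top + \one\bgamma_0^\top)\bGamma_0]$ and, using the cyclic invariance of the trace, reduce each of the two summands to an expression carrying the factor $\bGamma_0\one = \zero$ (respectively $\one^\top\bGamma_0 = \zero^\top$), so that the whole inner product vanishes.

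For the norm bound I would expand the difference using the same identity, namely $\bOmega_0 - \bGamma_0 = p^{-1}\one\one^\top\bOmega_0 + p^{-1}\bOmega_0\one\one^\top - p^{-2}\one\one^\top\bOmega_0\one\one^\top$, and read off its $(i,j)$ entry as $p^{-1}\sum_k\omega_{kj}^0 + p^{-1}\sum_k\omega_{ik}^0 - p^{-2}\sum_{k,l}\omega_{kl}^0$. Each single sum is bounded in absolute value by $\max_j\sum_i|\omega_{ij}^0| = \|\bOmega_0\|_1$, where for the row sum $\sum_k\omega_{ik}^0$ I use the symmetry of $\bOmega_0$ to pass to the column-based matrix $\ell_1$ norm, and the double sum is bounded by $p\|\bOmega_0\|_1$. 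Thus the three terms contribute at most $p^{-1}\|\bOmega_0\|_1 + p^{-1}\|\bOmega_0\|_1 + p^{-1}\|\bOmega_0\|_1$ uniformly over $i,j$, yielding $\|\bOmega_0-\bGamma_0\|_{\max}\le 3p^{-1}\|\bOmega_0\|_1$.

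The only genuinely substantive step is the first: recognizing that the geometric-mean normalizer in the centered log-ratio transform acts exactly as the centering projection $\Pb$ on $\bY$, so that $\bGamma_0 = \Pb\bOmega_0\Pb$. Once this is in hand, both conclusions are routine linear algebra, with orthogonality immediate from $\Pb\one=\zero$ and the $\|\cdot\|_{\max}$ estimate reducing to controlling each of the three correction terms individually by $p^{-1}\|\bOmega_0\|_1$; the one place to be careful is invoking the symmetry of $\bOmega_0$ so that the row sums, as well as the column sums, are dominated by $\|\bOmega_0\|_1 = \max_j\sum_i|\omega_{ij}^0|$.
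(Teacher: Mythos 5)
Your proof is correct and is essentially the argument the paper defers to: the paper itself gives no proof, citing \cite{cao2019large}, where the same route is taken via the identity $\bGamma_0=\Pb\bOmega_0\Pb$ with $\Pb=\Ib-p^{-1}\one\one^\top$, orthogonality from $\bGamma_0\one=\zero$ under the trace inner product, and the entrywise bound from the three correction terms each controlled by $p^{-1}\|\bOmega_0\|_1$. No gaps; your use of the symmetry of $\bOmega_0$ to bound the row sums by the column-based $\ell_1$ norm is exactly the needed care.
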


The proof of the lemma can be found in  \cite{cao2019large}, from which  we can conclude that the covariance matrix $\bOmega_0$ is approximately identifiable as long as $\|\bOmega_0\|_1=o(p)$. Assume that $\bOmega_0$ belongs to $\cU(q,s_0(p),M)$, the class of sparse covariances in \cite{bickel2008covariance},

\begin{equation}\label{equ:classofcovariance}
\cU(q,s_0(p),M)=\bigg\{\bOmega:\bOmega\succ0,\max_j\omega_{jj}\leq M,\max_i\sum_{j=1}^p|\omega_{ij}|^q\leq s_o(p)\bigg\},\ \ 0\leq q<1,
\end{equation}
then it can be shown that $\|\bOmega_0\|_1\leq M^{1-q}s_0(p)$. Thus $\bOmega_0$ and $\bGamma_0$ are asymptotically indistinguishable as long as $s_0(p)=o(p)$, which indicates $\bGamma_0$ can be used  as a good proxy for $\bOmega$. \cite{cao2019large} proposed a composition-adjusted thresholding (COAT) estimator based on this finding and obtained its convergence rate under the sub-Gaussian condition on $Y_j$'s. The sub-Gaussianity assumption can be too constrictive in practice, especially for microbiome data analysis, which motivates us to seek new procedures that can achieve the same minimax optimality when data are leptokurtic.

\section{Robust Covariance Matrix Estimator for Compositional Data}

In this section, we present the detailed robust covariance matrix estimation procedure for high-dimensional compositional data.  For notational simplicity, we let $W_i=\log(X_i/g(\bX))$ and $\bW=(W_1,\ldots,W_p)^\top$ and thus we have $\bGamma^0=\text{Cov}(\bW)$. Suppose that $(\bZ_k,\bX_k),k=1,\ldots,n$ are independent copies of $(\bZ,\bX)$. The compositions $\bX_k=(X_{k1},\ldots,X_{kp})^\top$ are observed while the latent bases $\bZ=(Z_{k1},\ldots,Z_{kp})^\top$ are unobservable. Notice that  $\bW$ are thus ``observed" by transforming the compositions $\bX$, and we denote $\bW_k=(W_{k,1},\ldots,W_{k,p})^\top$ with $W_{k,i}=\log(X_{ki}/g(\bX_k))$.

As $\bGamma_0$ acts as a proxy of $\bOmega$,
we first construct an estimate of $\bGamma_0$ and then apply adaptive thresholding to the estimate. From a robust perspective, we propose a medians of means estimator for $\bGamma_0$.

 Let $M\leq n$ be an integer and let $\cB=\{B_1,\ldots,B_M\}$ be a regular partition of $\{1,\ldots,n\}$, i.e.,
\[
\forall  K=1,\ldots,M, \ \ \big|\text{Card}{(B_K)}-\frac{n}{M}\big|\leq 1.
\]
The observations $\bW_k$ are partitioned into the $M$ blocks in $\cB$. Without loss of generality, we assume that $M$ is a factor of $n$, and $n=Md$. The samples in the $l$-th group is $\big\{\bW_{(l-1)d+1},\ldots,\bW_{(l-1)d+d}\big\}, l=1,\ldots,M$.
Then notice that $\gamma_{ij}=\text{Cov}(W_i,W_j)=E(W_iW_j)-E(W_i)E(W_j)$, the median of means estimator for $\gamma_{ij}$ can be constructed as:
\[
\hat\gamma_{ij}^M=\hat\mu_{ij}^M-\hat\mu_{i}^M\hat\mu_{j}^M, \ \ \text{with} \ \  \hat\mu_{ij}^M=\text{median}\big\{\overline{W}_{ij}^1,\ldots,\overline{W}_{ij}^M\big\}, \ \hat\mu_{i}^M=\text{median}\big\{\overline{W}_{i}^1,\ldots,\overline{W}_{i}^M\big\},
\]
where
\[
\overline{W}_{ij}^l=\frac{1}{d}\sum_{t=1}^dW_{(l-1)d+t,i}W_{(l-1)d+t,j}, \ \ \overline{W}_{i}^l=\frac{1}{d}\sum_{t=1}^dW_{(l-1)d+t,i}, \ \  l=1,\ldots,M, \ \ 1\leq i,j \leq p.
\]
Let $\hat\bGamma=(\hat\gamma_{ij}^M)$, and apply adaptive thresholding to $\hat\bGamma$. We obtain the robust estimator of $\bOmega$ as
\[
\hat\bOmega=(\hat{\omega}_{ij})_{p\times p} \ \ \text{with} \ \ \hat\omega_{ij}=\tau_{\lambda_{ij}}(\hat\gamma^M_{ij}),
\]
where $\lambda_{ij}>0$ are entry-wise thresholds and  $\tau_{\lambda}(\cdot)$ is a general thresholding function for which:
\begin{description}
  \item[(i)] $|\tau_{\lambda}(z)|\leq |y|$ for all $z$ and $y$ such that $|y-z|\leq \lambda$;
  \item[(ii)] $\tau_{\lambda}(z)=0$ for $|z|\leq \lambda$;
  \item[(iii)] $|\tau_{\lambda}(z)-z|\leq \lambda$ for all $z\in\RR$.
\end{description}

 The class of thresholding functions satisfying the three conditions include the soft thresholding rule $\tau_{\lambda}(z)=\text{sgn}(z)(|z|-\lambda)_{+}$, the adaptive lasso rule $\tau_{\lambda}(z)=z(1-|\lambda/z|^\eta)_{+}$ with $\eta\geq 1$, and the smoothly clipped absolute deviation thresholding rule proposed by \cite{rothman2009generalized}.

The performance of the robust estimator relies critically on the selected thresholds $\lambda_{ij}$. Similar to \cite{fan2013large} and \cite{avella2018robust}, we adopt the entry-dependent threshold

\begin{equation}\label{equ:threshold}
\lambda_{ij}=\lambda\left(\frac{\hat\gamma_{ii}^M\hat\gamma_{jj}^M\log p}{n}\right)^{1/2},
\end{equation}
where $\lambda>0$ is a constant. This is much simpler than the threshold used by \cite{cao2019large} as it does not require estimation of $\text{Var}\big\{(Y_i-EY_i)(Y_j-EY_j)\big\}$ and achieves the same optimality.

The thresholds in (\ref{equ:threshold}) depend on a tuning parameter $\lambda$ and can be selected by $V$-fold Cross Validation (CV).  In detail, denote by $\hat\bOmega^{(-v)}(\lambda)$ the robust estimate based on the samples excluding the $v$-th fold and $\hat \bGamma^{(v)}$ the robust median of means estimate based only on the samples in the $v$-th fold. The optimal value of $\lambda$ is chosen by minimizing the cross-validation error
\[
\lambda^*=\bargmin_{\lambda}\frac{1}{V}\sum_{t=1}^V\|\hat\bOmega^{(-v)}(\lambda)-\hat\bGamma^{(v)}\|_F^2.
\]

With the selected optimal tuning parameter $\lambda^*$, we then obtain the robust estimate based on the full dataset as the final estimate. The resulting estimate may not be positive-definite. To this end, we follow the approach in \cite{fan2013large} and choose $\lambda$ in the range where the minimum eigenvalue of the robust estimate is positive.
\section{Theoretical Analysis}
In this section we investigate the asymptotic properties of the robust estimator. Recall that $Y_j=\log Z_j$. Without loss of generality, we assume $EY_j=0$ for all $j$ throughout this section.  We assume the following conditions hold.
\vspace{1em}

\textbf{Assumption A}: Assume that $\max_{1\leq j\leq p}E(Y_j^4)=\kappa^2<\infty$.

\vspace{1em}

\textbf{Assumption B}: The basis covariance matrix $\bOmega_0$ belongs to the class
\[
\cU(q,s_0(p),M)=\bigg\{\bOmega:\bOmega\succ0,\max_j\omega_{jj}\leq M,\max_i\sum_{j=1}^p|\omega_{ij}|^q\leq s_0(p)\bigg\},\ \ 0\leq q<1,
\]
where $s_0(p)=O(p\sqrt{\log p/n})$, and $\log p=o(n)$.
\vspace{1em}

\textbf{Assumption C}: There exists a constant $\zeta>0$ such that $\min_i\omega_{ii}^0\geq \zeta$.

\vspace{1em}

\textbf{Assumption A}, \textbf{Assumption B} and \textbf{Assumption C} are common in the covariance matrix estimation literature, see, for example, \cite{cai2011adaptive,fan2013large,cao2019large}. \textbf{Assumption A} only requires that the fourth moments of $Y_j$ are uniformly bounded, which is much weaker than the sub-Gaussianity assumption in \cite{cao2019large}.  \textbf{Assumption B} imposed some conditions on the sparsity of the basis covariance matrix $\bOmega$ and the scaling between $p$ and $n$. The class of the sparse covariances are also considered in \cite{bickel2008covariance,cao2019large}. \textbf{Assumption C} is essential for adaptive thresholding methods.

The following theorem establishes the convergence rate of the median of means estimator $\hat\bGamma$ in terms of element-wise $\ell_\infty$-norm.

\begin{theorem}\label{theorem:1} Suppose that \textbf{Assumption A} and \textbf{Assumption B} hold, and let $\hat\bGamma$ be the median of means estimator based on the regular partitions in $\cB=\{B_1,\ldots,B_M\}$ with $M=\lceil(2+L)\log p\rceil$ for a positive constant $L$. Then we have for sufficiently large $n$ and a constant $C>0$,
\[
\Pr\left\{\|\hat\bGamma-\bOmega_0\|_{\max}\leq C\sqrt{\frac{\log p}{n}}\right\}\geq 1-\epsilon_{n,p},
\]
where $\epsilon_{n,p}\leq C_0p^{-L}$ for positive constants $C_0$ and $L$.

\end{theorem}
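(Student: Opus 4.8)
The plan is to split the target error with the triangle inequality as
\[
\|\hat\bGamma-\bOmega_0\|_{\max}\leq \|\hat\bGamma-\bGamma_0\|_{\max}+\|\bGamma_0-\bOmega_0\|_{\max},
\]
and to handle the two pieces separately. The second (approximation) term is deterministic and is already controlled by Lemma \ref{lemma:approximation}: since $\|\bGamma_0-\bOmega_0\|_{\max}\leq 3p^{-1}\|\bOmega_0\|_1$ and Assumption B gives $\|\bOmega_0\|_1\leq M^{1-q}s_0(p)$ with $s_0(p)=O(p\sqrt{\log p/n})$, this term is $O(\sqrt{\log p/n})$, already of the claimed order. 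All the real work lies in the stochastic term $\|\hat\bGamma-\bGamma_0\|_{\max}=\max_{i,j}|\hat\gamma_{ij}^M-\gamma_{ij}^0|$. The first thing I would record is the clean identity $W_i=\log(X_i/g(\bX))=Y_i-\bar Y$ with $\bar Y=p^{-1}\sum_{j=1}^p Y_j$; combined with the running normalization $EY_j=0$ this gives $EW_i=0$, so that $\gamma_{ij}^0=\mathrm{Cov}(W_i,W_j)=E(W_iW_j)=:\mu_{ij}$ while $\mu_i:=E(W_i)=0$. Hence $\hat\gamma_{ij}^M-\gamma_{ij}^0=(\hat\mu_{ij}^M-\mu_{ij})-\hat\mu_i^M\hat\mu_j^M$, and it suffices to control the three median-of-means statistics $\hat\mu_{ij}^M,\hat\mu_i^M,\hat\mu_j^M$ around their targets.

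The core device is a median-of-means deviation bound. For a fixed coordinate (say the estimator of $\mu_{ij}$), each of the $M$ block averages $\overline W_{ij}^{\,l}$ is an average of $d=n/M$ i.i.d.\ copies of $W_iW_j$, hence has mean $\mu_{ij}$ and variance $\sigma_{ij}^2/d$ with $\sigma_{ij}^2=\mathrm{Var}(W_iW_j)$. Here is where Assumption A enters: because $W_i=Y_i-\bar Y$, Minkowski's inequality gives $\|W_i\|_4\leq\|Y_i\|_4+\|\bar Y\|_4\leq 2\kappa^{1/2}$, and Cauchy--Schwarz then yields $\sigma_{ij}^2\leq E(W_i^2W_j^2)\leq (EW_i^4)^{1/2}(EW_j^4)^{1/2}\leq 16\kappa^2$, uniformly in $(i,j)$; the analogue for $\hat\mu_i^M$ needs only $\mathrm{Var}(W_i)\leq 4\kappa$. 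This is precisely the point at which a finite fourth moment of $Y$ replaces sub-Gaussianity, since the median-of-means estimator of a second moment requires only that the summand $W_iW_j$ have finite variance. Given the variance bound I would apply Chebyshev to each block, $\Pr\{|\overline W_{ij}^{\,l}-\mu_{ij}|>t\}\leq \sigma_{ij}^2 M/(nt^2)$, take $t=C\sqrt{\log p/n}$ so that this per-block failure probability is some constant $q$, and observe that $\{|\hat\mu_{ij}^M-\mu_{ij}|>t\}$ forces at least half of the blocks to fail. A Chernoff bound for the Binomial$(M,q)$ count of failed blocks then gives $\Pr\{|\hat\mu_{ij}^M-\mu_{ij}|>t\}\leq \exp\{-M\,\mathrm{KL}(1/2\,\|\,q)\}$.

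The delicate bookkeeping, and the step I expect to be the main obstacle, is matching this exponent to the union bound. With $M=\lceil(2+L)\log p\rceil$ the tail above is at most $p^{-(2+L)\,\mathrm{KL}(1/2\|q)}$, so I need $\mathrm{KL}(1/2\,\|\,q)\geq 1$; this fails for the naive choice $q=1/4$, and it is exactly why the threshold constant $C$ must be taken large enough (depending on $\kappa$ and $L$) to push $q$ small enough that $\mathrm{KL}(1/2\,\|\,q)\geq 1$, while keeping $C$ an absolute constant so the rate stays $O(\sqrt{\log p/n})$. With the exponent secured, a union bound over the $O(p^2)$ coordinates $\{\mu_{ij}\}$ and the $p$ coordinates $\{\mu_i\}$ gives, with probability at least $1-C_0p^{-L}$, the simultaneous bounds $|\hat\mu_{ij}^M-\mu_{ij}|\leq C\sqrt{\log p/n}$ and $|\hat\mu_i^M|\leq C\sqrt{\log p/n}$ for all $i,j$. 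On this event the product term is negligible, $|\hat\mu_i^M\hat\mu_j^M|\leq C^2\log p/n=o(\sqrt{\log p/n})$ because $\log p=o(n)$, so $\max_{i,j}|\hat\gamma_{ij}^M-\gamma_{ij}^0|\leq C'\sqrt{\log p/n}$. Adding back the approximation term yields the claim for sufficiently large $n$.
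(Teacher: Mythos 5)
Your proposal is correct, and its skeleton coincides with the paper's: the same triangle-inequality split into the stochastic error $\|\hat\bGamma-\bGamma_0\|_{\max}$ and the deterministic approximation error, with the latter handled exactly as in the paper via Lemma \ref{lemma:approximation} together with $\|\bOmega_0\|_1\leq M^{1-q}s_0(p)$ and $s_0(p)=O(p\sqrt{\log p/n})$. Where you genuinely diverge is in the treatment of the median-of-means deviation: the paper imports it wholesale as Lemma \ref{lemma:LO2011} (a version of Proposition 1 of \cite{lerasle2011robust}) and feeds it into a union bound in Lemma \ref{lemma:MOMconcentration}, whereas you reprove the tail bound from first principles --- Chebyshev on each block mean, then a Chernoff bound on the binomial count of bad blocks --- with the explicit bookkeeping that one needs $\mathrm{KL}(1/2\,\|\,q)\geq 1$ to survive the union bound over $O(p^2)$ entries with $M=\lceil(2+L)\log p\rceil$, which correctly forces the constant $C$ to depend on $\kappa$ and $L$. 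Your self-contained route buys two things the paper leaves implicit: first, you actually verify that Assumption A delivers the uniform variance bound $\mathrm{Var}(W_iW_j)\leq 16\kappa^2$ via $\|W_i\|_4\leq\|Y_i\|_4+\|\bar Y\|_4\leq 2\kappa^{1/2}$ and Cauchy--Schwarz; this is precisely where the finite-fourth-moment condition enters, and the paper applies Lemma \ref{lemma:LO2011} with $\mathrm{Var}(W_iW_j)$ without ever bounding it. Second, by exploiting $EW_i=0$ (legitimate under the paper's standing normalization $EY_j=0$, since $W_i=Y_i-\bar Y$), your cross term collapses to $|\hat\mu_i^M\hat\mu_j^M|\leq C^2\log p/n=o(\sqrt{\log p/n})$, which is cleaner than the paper's bound on $\max_{i,j}|\hat\mu_i^M\hat\mu_j^M-EW_iEW_j|$ that drags along $|EW_i|$ factors. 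A minor further point in your favor: Lemma \ref{lemma:LO2011} is stated one-sided and the paper silently doubles failure probabilities to make it two-sided, while your Chebyshev-based block bound is two-sided from the outset. The trade-off is only length: the paper's citation-based argument is shorter, while yours makes the role of Assumption A and the choice of $M$ fully transparent.
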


Theorem \ref{theorem:1} also provides a guidance for the selection of tuning parameter $M$. In fact, the choice of $M$ involves a compromise between bias and variance. For $M=1$ and $M=n$, it degenerates to sample mean and sample median, respectively. Sample mean is asymptotically unbiasd but does not concentrate exponentially fast in presence of heavy-tails, while the sample median concentrates exponentially fast but not to the population mean for asymmetric distributions. The choice $M=\lceil(2+L)\log p\rceil$ is an ideal one for which both goals are achieved simultaneously.

In the following theorem, we establish the convergence rate of the estimator $\hat\bOmega$ to $\bOmega_0$ in terms of matrix $\ell_2$-norm (spectral norm), which matches the minimax rate in \cite{cai2011adaptive}.
\begin{theorem}\label{theorem:2}
Suppose that \textbf{Assumption A}, \textbf{Assumption B} and \textbf{Assumption C} hold. Then there exists a positive constant $C$ such that
\[
\inf_{\bOmega\in\cU(q,s_0(p),M)}\Pr\left\{\|\hat\bOmega-\bOmega_0\|_2\leq Cs_0(p)\left(\frac{\log p}{n}\right)^{(1-q)/2}\right\}\geq 1-\epsilon_{n,p},
\]
where $\epsilon_{n,p}$ is a deterministic sequence that decreases to zero as $n,p\rightarrow\infty$.
\end{theorem}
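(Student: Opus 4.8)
The plan is to deduce the spectral-norm bound from the entry-wise bound in Theorem~\ref{theorem:1} through the standard symmetric-matrix inequality $\|\hat\bOmega-\bOmega_0\|_2\le\|\hat\bOmega-\bOmega_0\|_1=\max_i\sum_{j=1}^p|\hat\omega_{ij}-\omega_{ij}^0|$, so that everything reduces to bounding a single row sum on a high-probability event. Write $r_n=\sqrt{\log p/n}$ and let $\mathcal{A}$ be the event $\{\|\hat\bGamma-\bOmega_0\|_{\max}\le Cr_n\}$, which by Theorem~\ref{theorem:1} has probability at least $1-C_0p^{-L}$; all subsequent bounds are argued on $\mathcal{A}$, and the final $\epsilon_{n,p}$ will be taken equal to $C_0p^{-L}$. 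In particular, on $\mathcal{A}$ every diagonal entry satisfies $|\hat\gamma_{ii}^M-\omega_{ii}^0|\le Cr_n$.

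First I would pin down the random thresholds. Since $\omega_{ii}^0\in[\zeta,M]$ by Assumptions~B and C, the diagonal control on $\mathcal{A}$ gives $\hat\gamma_{ii}^M\in[\zeta-Cr_n,\,M+Cr_n]$, so for $n$ large enough (using $\log p=o(n)$, hence $r_n\to0$) we have $\hat\gamma_{ii}^M\in[\zeta/2,2M]$ uniformly in $i$. Consequently the data-driven thresholds in (\ref{equ:threshold}) are sandwiched as $\lambda(\zeta/2)r_n\le\lambda_{ij}\le 2\lambda M r_n$, i.e.\ $\lambda_{ij}\asymp r_n$ uniformly over $(i,j)$. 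This is the step I expect to be the main obstacle: the thresholds are themselves random and must be controlled from \emph{both} sides so that thresholding neither destroys too many genuinely large entries nor retains too many spurious ones, which couples the off-diagonal thresholding to the diagonal estimation error. I would then fix $\lambda$ once and for all to be a large enough constant, namely $\lambda\ge 4C/\zeta$, so that $Cr_n\le\lambda_{ij}/2$ holds for every $(i,j)$ on $\mathcal{A}$; this choice is what makes the thresholding rule valid.

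With the thresholds controlled, I would split each row according to the size of the true entry. For indices with $|\omega_{ij}^0|\ge\lambda_{ij}/2$, I bound $|\hat\omega_{ij}-\omega_{ij}^0|\le|\tau_{\lambda_{ij}}(\hat\gamma_{ij}^M)-\hat\gamma_{ij}^M|+|\hat\gamma_{ij}^M-\omega_{ij}^0|\le\lambda_{ij}+Cr_n\lesssim r_n$ using property~(iii) and the event $\mathcal{A}$, while the number of such indices in row $i$ is at most $\sum_j(|\omega_{ij}^0|/(\lambda_{ij}/2))^q\lesssim r_n^{-q}s_0(p)$ by the $\ell_q$-ball constraint of $\cU(q,s_0(p),M)$; their total contribution is therefore $\lesssim r_n^{-q}s_0(p)\cdot r_n=s_0(p)r_n^{1-q}$. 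For indices with $|\omega_{ij}^0|<\lambda_{ij}/2$, the bound $|\hat\gamma_{ij}^M|\le|\omega_{ij}^0|+Cr_n<\lambda_{ij}$ forces $\hat\omega_{ij}=\tau_{\lambda_{ij}}(\hat\gamma_{ij}^M)=0$ by property~(ii), so $|\hat\omega_{ij}-\omega_{ij}^0|=|\omega_{ij}^0|$, and $\sum_{j:\,|\omega_{ij}^0|<\lambda_{ij}/2}|\omega_{ij}^0|\le(\max_{ij}\lambda_{ij})^{1-q}\sum_j|\omega_{ij}^0|^q\lesssim r_n^{1-q}s_0(p)$, again by the $\ell_q$ constraint. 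Summing the two contributions gives $\max_i\sum_j|\hat\omega_{ij}-\omega_{ij}^0|\lesssim s_0(p)r_n^{1-q}=s_0(p)(\log p/n)^{(1-q)/2}$, which yields the claimed spectral-norm rate.

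Finally, I would note that the constant $C$ from Theorem~\ref{theorem:1} depends only on $\kappa$ in Assumption~A and on the structural constants $M,\zeta$, not on the particular $\bOmega_0$, and every step above uses $\bOmega_0$ only through its membership in $\cU(q,s_0(p),M)$; hence the whole argument is uniform over the class and the bound may be stated with an infimum over $\bOmega\in\cU(q,s_0(p),M)$, with $\epsilon_{n,p}\le C_0p^{-L}\to0$ as $n,p\rightarrow\infty$.
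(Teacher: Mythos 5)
Your proposal is correct and takes essentially the same route as the paper's proof: reduce the spectral norm to the matrix $\ell_1$ norm of the symmetric error, work on the high-probability event supplied by Theorem~\ref{theorem:1}, use Assumptions B and C to sandwich the random thresholds $\lambda_{ij}\asymp\sqrt{\log p/n}$ from both sides, and bound each row sum via the $\ell_q$-ball constraint to obtain the rate $s_0(p)(\log p/n)^{(1-q)/2}$. The only cosmetic difference is that you split each row by the size of the true entries $|\omega_{ij}^0|$ relative to $\lambda_{ij}/2$, using only properties (ii) and (iii) of the thresholding function and making the requirement $\lambda\gtrsim C/\zeta$ explicit, whereas the paper splits on the estimated entries relative to $\lambda_{ij}$ on the events $E_1\cap E_2$ and also invokes property (i); both variants are instances of the same standard adaptive-thresholding argument.
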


Theorem \ref{theorem:2} generalizes Theorem 1 of \cite{cao2019large}. The lower bound of \cite{cao2019large} matches ours if the approximation error is dominated by the estimation error, i.e, $s_0(p)=O(p\sqrt{\log p/n})$.  This implies that our procedure is minimax optimal for a wider class of distributions containing  the sub-Gaussian distributions.

In the following theorem, we obtain the support recovery property of the estimator $\hat\bOmega$, where the support of $\bOmega_0$ refers to $\big\{(i,j),\omega_{ij}^0\neq 0\big\}$.
\begin{theorem}\label{theorem:3}
Suppose that \textbf{Assumption A}, \textbf{Assumption B} and \textbf{Assumption C} hold. Then the robust estimator $\hat\bOmega$ satisfies

\begin{equation}\label{equ:theorem31}
\Pr\left(\hat\omega_{ij}=0 \ \ \text{for all} \ \ (i,j) \ \ \text{with} \ \ \omega_{ij}^0=0\right)\rightarrow 1.
\end{equation}
Furthermore, if for a sufficiently large constant $C$,
\[
\bmin_{(i,j):\omega_{ij}^0\neq 0}|\omega_{ij}^0|/\sqrt{\omega_{ii}^0\omega_{jj}^0}\geq C\sqrt{\frac{\log p}{n}}
\]
then we have
\[
\Pr\big\{\text{sgn}(\hat\omega_{ij})=\text{sgn}(\omega_{ij}^0) \ \text{for all}\ (i,j)\big\}\rightarrow 1.
\]
\end{theorem}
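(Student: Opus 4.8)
The plan is to run the whole argument on the single high-probability event
\[
\mathcal{E}=\Big\{\|\hat\bGamma-\bOmega_0\|_{\max}\leq C_1\sqrt{\tfrac{\log p}{n}}\Big\},
\]
which by Theorem \ref{theorem:1} has $\Pr(\mathcal{E})\geq 1-\epsilon_{n,p}$ with $\epsilon_{n,p}\to 0$. A point worth stressing is that this $\max$-norm bound already absorbs both the estimation error of $\hat\bGamma$ for $\bGamma_0$ and the approximation error $\|\bGamma_0-\bOmega_0\|_{\max}$ (controlled through Lemma \ref{lemma:approximation} and Assumption B), so on $\mathcal{E}$ I may work directly with $|\hat\gamma_{ij}^M-\omega_{ij}^0|\leq C_1\sqrt{\log p/n}$ for every entry, diagonal included, without any separate bookkeeping of $\gamma_{ij}^0$ versus $\omega_{ij}^0$. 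The first thing I would do on $\mathcal{E}$ is pin down the data-driven threshold $\lambda_{ij}=\lambda(\hat\gamma_{ii}^M\hat\gamma_{jj}^M\log p/n)^{1/2}$. Combining the diagonal deviation bound with Assumption C ($\omega_{ii}^0\geq\zeta$) and $\omega_{ii}^0\leq M$, for $n$ large enough that $C_1\sqrt{\log p/n}\leq\zeta/2$ I get $\hat\gamma_{ii}^M\in[\zeta/2,\tfrac32\omega_{ii}^0]$ and hence the two-sided sandwich
\[
\lambda(\zeta/2)\sqrt{\tfrac{\log p}{n}}\ \leq\ \lambda_{ij}\ \leq\ \tfrac{3}{2}\lambda\sqrt{\omega_{ii}^0\omega_{jj}^0}\,\sqrt{\tfrac{\log p}{n}}.
\]

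For the first claim \eqref{equ:theorem31} I fix any $(i,j)$ with $\omega_{ij}^0=0$. On $\mathcal{E}$ one has $|\hat\gamma_{ij}^M|=|\hat\gamma_{ij}^M-\omega_{ij}^0|\leq C_1\sqrt{\log p/n}$, so taking the threshold constant $\lambda\geq 2C_1/\zeta$ and using the lower half of the sandwich gives $|\hat\gamma_{ij}^M|\leq\lambda(\zeta/2)\sqrt{\log p/n}\leq\lambda_{ij}$; thresholding property (ii) then forces $\hat\omega_{ij}=\tau_{\lambda_{ij}}(\hat\gamma_{ij}^M)=0$. Because the defining bound of $\mathcal{E}$ is uniform over all entries, every zero entry is killed simultaneously, and \eqref{equ:theorem31} follows with probability at least $1-\epsilon_{n,p}\to 1$.

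For sign consistency I then treat $(i,j)$ with $\omega_{ij}^0\neq 0$. The minimum-signal condition reads $|\omega_{ij}^0|\geq C\sqrt{\omega_{ii}^0\omega_{jj}^0}\sqrt{\log p/n}$, and since $\sqrt{\omega_{ii}^0\omega_{jj}^0}\geq\zeta$ the noise is negligible against the signal:
\[
|\hat\gamma_{ij}^M|\ \geq\ |\omega_{ij}^0|-C_1\sqrt{\tfrac{\log p}{n}}\ \geq\ \big(C-C_1/\zeta\big)\sqrt{\omega_{ii}^0\omega_{jj}^0}\,\sqrt{\tfrac{\log p}{n}}.
\]
Comparing with the upper half of the sandwich, I would choose $C$ large enough that $C-C_1/\zeta>\tfrac32\lambda$, so that $|\hat\gamma_{ij}^M|>\lambda_{ij}$; the same gap guarantees $\hat\gamma_{ij}^M$ and $\omega_{ij}^0$ share a sign, as a deviation of order $C_1\sqrt{\log p/n}$ cannot flip a signal of order $\zeta\sqrt{\log p/n}$. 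Finally I invoke property (iii), $|\tau_{\lambda_{ij}}(z)-z|\leq\lambda_{ij}$: when $\omega_{ij}^0>0$ the inequality $\hat\gamma_{ij}^M>\lambda_{ij}$ yields $\hat\omega_{ij}\geq\hat\gamma_{ij}^M-\lambda_{ij}>0$, and symmetrically $\hat\omega_{ij}<0$ when $\omega_{ij}^0<0$, giving $\text{sgn}(\hat\omega_{ij})=\text{sgn}(\omega_{ij}^0)$. Together with the zeros handled above, full sign recovery holds on $\mathcal{E}$, hence with probability at least $1-\epsilon_{n,p}\to 1$.

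The argument is essentially deterministic once Theorem \ref{theorem:1} supplies $\mathcal{E}$, and I expect the one genuine obstacle to be the randomness of the entrywise threshold $\lambda_{ij}$. The crux is the two-sided control above: I must lower-bound $\lambda_{ij}$ so it dominates the noise at the true zeros, yet upper-bound it so the true signals still clear it, and both bounds have to be calibrated to the exact $\sqrt{\omega_{ii}^0\omega_{jj}^0}\sqrt{\log p/n}$ scale appearing in the minimum-signal condition. Assumption C is what makes this possible, since it rules out a vanishing diagonal that would otherwise degenerate the threshold, and it is precisely the ingredient letting a single constant $C$ serve both the upper and the lower comparison at once.
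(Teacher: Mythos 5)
Your proposal is correct and follows essentially the same route as the paper's proof: both condition on the elementwise deviation event supplied by Theorem \ref{theorem:1}, use \textbf{Assumption C} to sandwich the random thresholds $\lambda_{ij}$ between a multiple of $\zeta\sqrt{\log p/n}$ and a multiple of $\sqrt{\omega_{ii}^0\omega_{jj}^0\log p/n}$, and then apply thresholding properties (ii) and (iii) to the zero and nonzero entries respectively. The only cosmetic difference is that you make the constant calibration ($\lambda\geq 2C_1/\zeta$ for killing the zeros, $C>C_1/\zeta+\tfrac{3}{2}\lambda$ for clearing the threshold at the signals) fully explicit on a single event, whereas the paper reaches the same constants by recycling the bounds (\ref{equ:S1}) and (\ref{equ:S222}) established in the proof of Theorem \ref{theorem:2}.
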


Theorem \ref{theorem:3} illustrates that as long as the minimum signal is large enough, the proposed estimator can exactly  recover the support of $\bOmega_0$ with probability tending to 1.

\section{Simulation Study}\label{sec:simulation}
In this section, we conduct thorough numerical studies to investigate the empirical performance of the proposed estimator in various data-generating settings. We compare our Robust Covariance Estimator for Compositional (RCEC) data $\hat{\bOmega}_{rcec}$ with the oracle thresholding estimator $\hat{\bOmega}_{oracle}$, the COAT estimator $\hat\bOmega_{coat}$ in \cite{cao2019large}, the SparCC estimator in $\hat\bOmega_{sparcc}$ \cite{Friedman2012Inferring}, the CCLasso estimator $\hat\bOmega_{cclasso}$ in \cite{Fang2015CCLasso} and the REBACCA estimator $\hat\bOmega_{rebacca}$ in \cite{Jiang2015Investigating}. For the oracle thresholding estimator, we assume that the latent basis components are observable and apply the thresholding procedure to the median of means covariance matrix estimator.  In fact, $\hat{\bOmega}_{oracle}$ is the estimator that our method attempts to mimic.   For the implementation of COAT, we use the \textsf{R} code downloaded from \url{https://github.com/yuanpeicao/COAT}.
The tuning parameter $\lambda$ for the thresholding estimators $\hat{\bOmega}_{rcec},\hat{\bOmega}_{oracle},\hat\bOmega_{coat}$ was all chosen by 5-fold cross-validation with the soft thresholding rule $\tau_{\lambda}(z)=\text{sgn}(z)(|z|-\lambda)_{+}$ for a fair comparison. For the implementation of CCLasso and SparCC, we use the \textsf{R} code with its default parameter settings downloaded from \url{https://github.com/huayingfang/CCLasso}. For the implementation of REBACCA, we use the \textsf{R} code with its default parameter settings  downloaded from \url{http://faculty.wcas.northwestern.edu/~hji403/REBACCA.htm}.

To illustrate the robustness of the proposed method, we consider the following  data-generating settings. First, we consider the following structure for the covariance matrix $\bOmega_0$:

\vspace{0.5em}

Let $\bOmega_0=\text{diag}(\Ab_1,\Ab_2)$, where $\Ab_1=(\sigma_{ij})_{1\leq i,j\leq p/2}$, $\sigma_{ij}=\big(1-|i-j|/10\big)_{+}$, $\Ab_2=4\Ib_{p/2\times p/2}$, i.e,  $\bOmega_0$ is a two-block diagonal matrix, the first block is banded, and the second block is diagonal matrix with 4 along the diagonal.

\begin{table}[ht]
\caption{Simulation results for Case 1, the values in the parenthesis are the standard errors.}  
\bigskip
\centering 
\begin{tabular}{c c c c c c c c}  
\toprule[2pt]
   $p$  &  $\widehat{\Omega}_{rcec}$  & $\widehat{\Omega}_{oracle}$ & $\widehat{\Omega}_{coat}$ & & $\widehat{\Omega}_{cclasso}$           &$\widehat{\Omega}_{sparcc}$ &   $\widehat{\Omega}_{rebacca}$       \\ [0.5ex]
\hline  
          &   \multicolumn{7}{c}{\textbf{Matrix $L_1$ norm loss}}\\[0.5ex]
      50  & 8.139(0.390)  &5.822(0.878)  &7.881(0.405) &  &7.178(0.793) &17.099(1.711) &6.551(1.190)\\
     100  &7.904(0.359)  &6.995(0.537)  &7.305(0.476) &  &6.735(0.724) &32.255(2.678) &9.174(1.942)\\
     200  & 7.993(0.324) & 7.614(0.391) & 7.094(0.467) &  &7.046(0.568) & 62.209(3.181) & 15.754(2.396) \\
          &   \multicolumn{7}{c}{\textbf{Spectral norm loss}}\\[0.5ex]
      50  &6.556(0.285)  &4.385(0.737)  &6.184(0.315) &  &6.137(0.562) &10.138(0.792) &7.446(0.696)\\
     100  &6.560(0.309)  &5.538(0.419)  &5.789(0.313) &  &6.635(0.701) &13.870(0.933) &9.458(0.924)\\
     200  &6.643(0.281)  & 6.185(0.313) & 5.536(0.351)&  & 6.860(0.638)&19.473(0.988) &11.018(0.805) \\
          &    \multicolumn{7}{c}{\textbf{Matrix Fronbenius norm loss}}\\[0.5ex]
      50  &8.822(0.401)  &6.746(0.653)  &7.925(0.358) &  &7.112(0.633) &13.941(0.559) &8.611(0.790)\\
     100  &12.306(0.532) &10.766(0.558) &10.272(0.491)&  &9.143(0.694) &25.725(0.545) &13.977(0.755)\\
     200  & 17.506(0.528)& 16.410(0.592) & 13.941(0.588)&  & 12.853(0.653)& 50.453(0.688) &26.247(1.014)\\
          &    \multicolumn{7}{c}{\textbf{True positive rate}}\\[0.5ex]
      50  &0.623(0.045)  &0.782(0.062)  &0.746(0.043) &  &0.877(0.086) &1.000(0.000) &0.523(0.033)\\
     100  &0.618(0.037)  &0.718(0.033)  &0.735(0.027) &  &0.776(0.035) &1.000(0.000) &0.639(0.029)\\
     200  &0.621(0.025)  &0.671(0.026)  &0.753(0.024) &  &0.767(0.028)  &1.000(0.000) &0.684(0.021)\\
          &   \multicolumn{7}{c}{\textbf{False positive rate}}\\[0.5ex]
      50  &0.078(0.033)  &0.022(0.016)  &0.216(0.068) &  &0.394(0.133) &1.000(0.000) &0.031(0.009)\\
     100  &0.020(0.010)  &0.011(0.004)  &0.082(0.028) &  &0.107(0.029) &1.000(0.000) &0.025(0.003)\\
     200  &0.006(0.002)  &0.005(0.002)  &0.029(0.008) &  &0.036(0.008)  &1.000(0.000)  &0.026(0.003)\\
\bottomrule[2pt]
\end{tabular}
\label{table:Case1}  
\end{table}

The $(\bZ_k,\bX_k)$ for $k=1,\ldots,n$ are generated in the following way. We first generate $\bY_k$ in four different ways.

\vspace{0.5em}

\textbf{Case 1}: $\bY_k$ are independently drawn from multivariate normal distribution $\cN(\zero,\bOmega_0)$;

\vspace{0.5em}

\textbf{Case 2}: $\bY_k$ are independently drawn from multivariate $t$-distribution $t_\nu(\zero,\bOmega_0)$ with $\nu=3.5$, where the Probability Distribution Function (PDF) of a $p$-dimensional multivariate $t$ distribution $t_{\nu}(\bmu,\bSigma_{p\times p})$ is
	\begin{displaymath}
	\frac{{\Gamma\big((\nu+p)/2\big)}}{\Gamma(\nu/2)\nu^{p/2}\pi^{p/2}|\bSigma|^{1/2}}\bigg\{1+\frac{1}{\nu}(\bx-\bmu)^\top\bSigma^{-1}(\bx-\bmu)\bigg\}^{-(\nu+p)/2},
	\end{displaymath}
	 {where $\Gamma(\cdot)$ is the gamma function}.

\vspace{0.5em}

\textbf{Case 3}: $\bY_k$ are independently drawn from multivariate skewed $t$-distribution with four degrees of freedom and skew parameter equal to 20, i.e., $\mathcal{ST}_{p}(\bxi={\zero},\bOmega_0,\balpha={\bf 20},\nu=4)$, generated by function \texttt{rmvst} in \textsf{R} package \texttt{fMultivar}.

\vspace{0.5em}

\textbf{Case 4}:  $\bY_k$ are independently drawn from contaminated multivariate skewed $t$-distribution \cite{Azzalini2010The},  with 4 degree of freedom and skew parameter equal to 10. In detail, $\bY_k$ is generated as $\bY_k=(1-b_k)\bV_k^{(1)}+b_k\bV_k^{(2)}$, where $b_k\sim Binomial(1,0.05)$, $\bV_k^{(1)}\sim \mathcal{ST}_{p}(\bxi={\zero},\bOmega_0,\balpha={\bf 10},\nu=4)$ and $\bV_k^{(2)}\sim \cN(-8\one_p,\Ib)$.

\vspace{0.5em}

\begin{table}[ht]
\caption{Simulation results for Case 2, the values in the parenthesis are the standard errors.}  
\bigskip
\centering 
\scalebox{0.9}{
\begin{tabular}{c c c c c c c c}  
\toprule[2pt]
   $p$  &  $\widehat{\Omega}_{rcec}$  & $\widehat{\Omega}_{oracle}$ & $\widehat{\Omega}_{coat}$ & & $\widehat{\Omega}_{cclasso}$           &$\widehat{\Omega}_{sparcc}$ &   $\widehat{\Omega}_{rebacca}$       \\ [0.5ex]
\hline  
          &   \multicolumn{7}{c}{\textbf{Matrix $L_1$ norm loss}}\\[0.5ex]
      50  &8.172(0.719)  &6.951(1.195)  &14.343(9.002) &  &21.208(20.966) &122.247(112.76)  &35.905(30.64)\\
     100  &8.084(0.679)  &7.606(0.788)  &15.682(10.673)&  &24.207(23.803) &252.876(254.277) &54.368(37.343)\\
     200  &8.277(0.635)  &8.205(0.949)  &16.036(13.363)&  &27.744(41.712) &455.974(487.957) &101.143(136.503)\\
          &   \multicolumn{7}{c}{\textbf{Spectral norm loss}}\\[0.5ex]
      50  &6.601(1.186)  &5.965(1.670)  &13.119(9.309) &  &20.374(13.346) &58.937(50.965)   &27.525(17.450)\\
     100  &6.695(1.173)  &6.160(1.252)  &14.542(10.898)&  &20.357(11.889) &105.334(111.803) &33.370(18.152)\\
     200  &6.914(1.148)  &6.849(1.426)  &14.867(13.692)&  &20.564(15.549) &168.714(180.732) &46.897(40.224)\\
          &    \multicolumn{7}{c}{\textbf{Matrix Fronbenius norm loss}}\\[0.5ex]
      50  &16.403(3.518) &15.943(3.815) &31.805(15.447)&  &36.447(18.930) &71.412(48.314)   &44.086(22.683)\\
     100  &21.028(4.690) &20.468(4.870) &43.936(22.854)&  &47.361(24.534) &134.285(106.350) &68.217(31.898)\\
     200  &29.848(5.615) &29.479(5.934) &61.298(32.621)&  &64.495(34.214) &239.031(178.771) &117.097(65.765)\\
          &    \multicolumn{7}{c}{\textbf{True positive rate}}\\[0.5ex]
      50  &0.495(0.046)  &0.656(0.077)  &0.603(0.089)  &  &0.461(0.335)   &1.000(0.000)     &0.262(0.109)\\
     100  &0.504(0.042)  &0.588(0.052)  &0.601(0.096)  &  &0.431(0.239)   &1.000(0.000)     &0.389(0.133)\\
     200  &0.500(0.033)  &0.537(0.039)  &0.615(0.081)  &  &0.434(0.212)   &1.000(0.000)     &0.460(0.104)\\
          &   \multicolumn{7}{c}{\textbf{False positive rate}}\\[0.5ex]
      50  &0.030(0.019)  &0.012(0.007)  &0.114(0.051)  &  &0.180(0.121)   &1.000(0.000)     &0.027(0.009)\\
     100  &0.008(0.005)  &0.005(0.003)  &0.049(0.020)  &  &0.051(0.026)   &1.000(0.000)     &0.024(0.006)\\
     200  &0.003(0.001)  &0.002(0.001)  &0.024(0.008)  &  &0.017(0.009)   &1.000(0.000)     &0.023(0.004)\\
\bottomrule[2pt]
\end{tabular}}
\label{table:Case2}  
\end{table}

Then $\bZ_k=(Z_{k1},\ldots,Z_{kp})^\top$ and $\bX_k=(X_{k1},\ldots,X_{kp})^\top$ were obtained by the transformations
\[
Z_{kj}=\exp(Y_{kj}), \ \ \text{and}\ \ X_{kj}=\frac{Z_{kj}}{\sum_{i=1}^pZ_{ki}}, \ \ j=1,\ldots, p.
\]
In \textbf{Case 1},  the latent variables $\bY_k$ are generated from Gaussian distribution. In \textbf{Case 2},  the latent variables $\bY_k$ are generated from symmetric multivariate $t$ distribution with degree 3.5.  In \textbf{Case 3},  the latent variables $\bY_k$ are generated from skewed $t$-distribution.  \textbf{Case 4} is from \cite{avella2018robust}, in which the latent variables $\bY_k$ are generated from contaminated skewed $t$-distribution with four degrees of freedom and skew parameter equal to 10.  We set the sample size $n=100$ and the dimension $p=50,100,200$, and conducted 200 replications for each setting. To evaluate the performance of different estimators, we adopt matrix $L_1$-norm, spectral norm, and Frobenius norm to measure the estimation losses and  use  the true positive rate and false positive rate to assess the quality of support
recovery. In all simulation settings, we let $M=\lceil(2+L)\log p\rceil$ with $L=1$.

The simulation results for Cases 1 are presented in Tables \ref{table:Case1}. From Table \ref{table:Case1}, we can see that the proposed RCEC performs comparably with the COAT and CCLasso method, while performs better than  SparCC and REBACCA,  in terms of both estimation losses and support recovery, when the underlying variables are from Gaussian distribution. In addition, we can also see that the RCEC method performs almost the same with the estimator  $\widehat{\Omega}_{oracle}$, which indicates that the RCEC method can act as if the latent variables  generating the compositions are observed. It seems that CCLasso method performs the best in Case 1, with smaller estimation error losses and higher True Positive Rate (TPR).

The simulation results for Cases 2 are presented in Tables \ref{table:Case2}. From Table \ref{table:Case2} we can see that, the proposed RCEC outperforms all its competitors in terms of  estimation losses  by a large margin, and performs almost the same with the estimator  $\widehat{\Omega}_{oracle}$, which shows the robustness and superiority of the RCEC method when the underlying variables are from heavy-tailed distributions. {As for the support recovery, the proposed RCEC method seems to have satisfactory true positive rates and lower false positive rates, compared with those by COAT, CCLasso, SparCC and REBACCA methods.} We can also see that the proposed RCEC method performs comparably with the estimator $\widehat{\Omega}_{oracle}$ in terms of both estimation losses and support recovery. The same conclusions for Case 2 can be drawn for Case 3 and Case 4,  based on the results presented in Table \ref{table:Case3} and Table \ref{table:Case4} in the appendix.

In conclusion, the proposed RCEC performs well in various data generating scenarios in terms of both estimation losses and support recovery, while the COAT method proposed by \cite{cao2019large} no longer works well when the underlying variables are from heavy-tailed or highly skewed asymmetric distributions. In other word, the RCEC may be used as an alternative of the COAT, CCLasso, SparCC and REBACCA in covariance matrix estimation for compositional data.

\section{Real data example}
In this section, we illustrate our estimator with a microbiome dataset in human gut. It is well-known that the gut microbiome plays a critical role in energy extraction from the diets. The microbiome taxa interacts with the immune system and thus has a profound influence on human health. The interactions among microbial taxa may provide new insight into the cause of disease such as obesity. We apply the proposed method to analyze a human gut microbiome dataset. The dataset was also analyzed in \cite{Coyte2015The,cao2019large}, from which one can get the detailed description of the dataset. In this real data example, we are also interested in investigating the underlying correlation structures among bacterial genera between lean and obese subjects as in \cite{cao2019large}. The dataset was divided into a lean group and an obese group according to the BMI index. A subject is assigned to the lean group if its $\text{BMI}<25$
and assigned to the obese group otherwise. It turns out the lean group has 63 subjects and the obese group has 35 subjects. We focused
on the 40 bacterial genera which appeared in at least four samples in each group. The original data were transformed into compositions after the zero counts were replaced by 0.5.  Figure \ref{fig:kur-hist} shows the frequency histogram of the sample kurtosis for the 40 bacterial genera in the obese group and in the lean group. About one half of the 40 bacterial genera show larger sample kurtosis than the value 9, which is the theoretical kurtosis of $t_5$ distribution. Thus it is more reasonable to take the heavy-tailed property into consideration.

\begin{figure}[h]
  \centering
  \begin{minipage}[!t]{0.35\linewidth}
    \includegraphics[width=6cm, height=6cm]{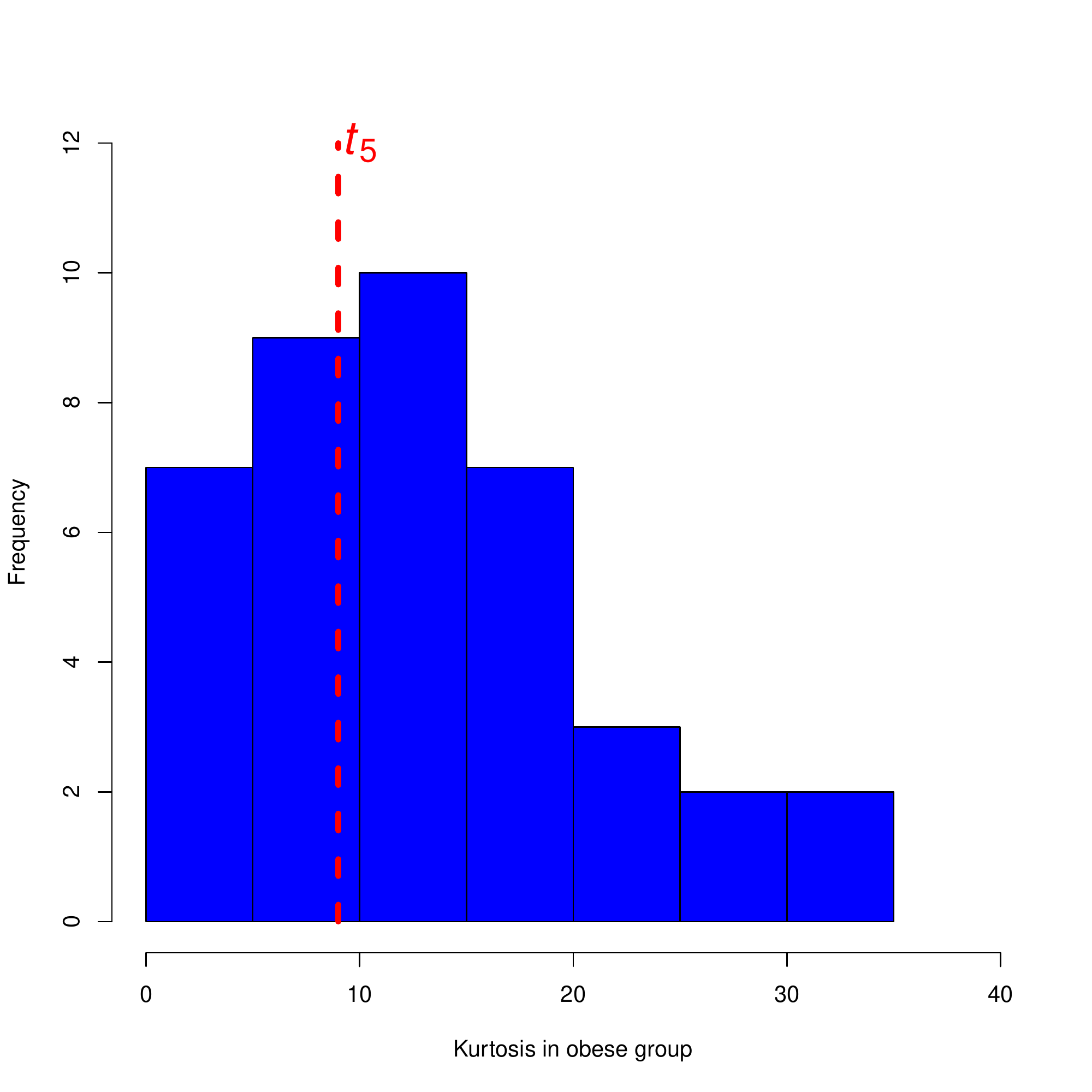}\\
  \end{minipage}
  \begin{minipage}[!t]{0.35\linewidth}
    \includegraphics[width=6cm, height=6cm]{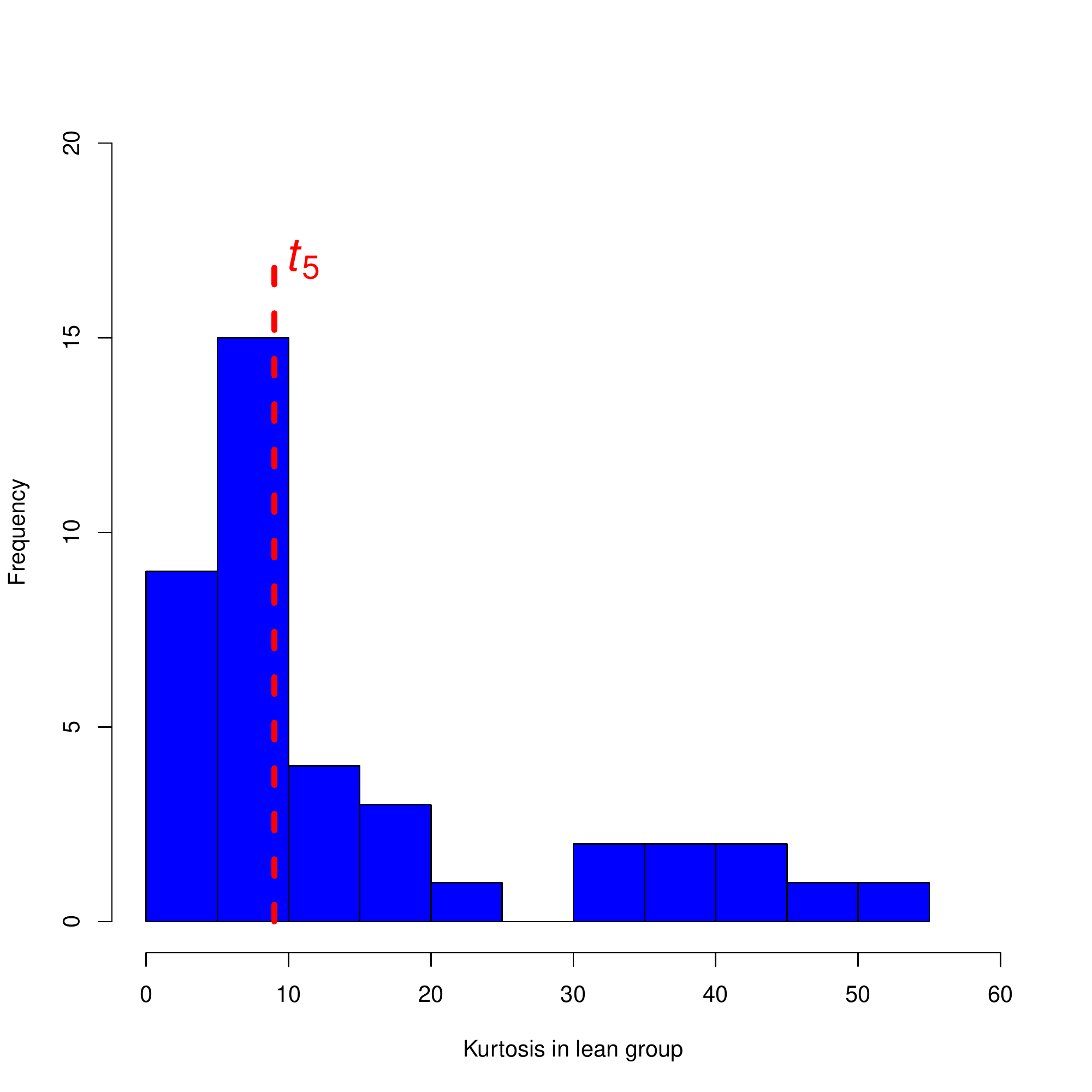}\\
  \end{minipage}
 \caption{Histogram of the sample kurtosis for the 40 bacterial genera in the obese group (left)  and lean group (right) and the red dashed line is the theoretical kurtosis of $t_5$ distribution. }\label{fig:kur-hist}
 \end{figure}

\begin{figure}[!h]
  \centering
  \includegraphics[width=15cm, height=10cm]{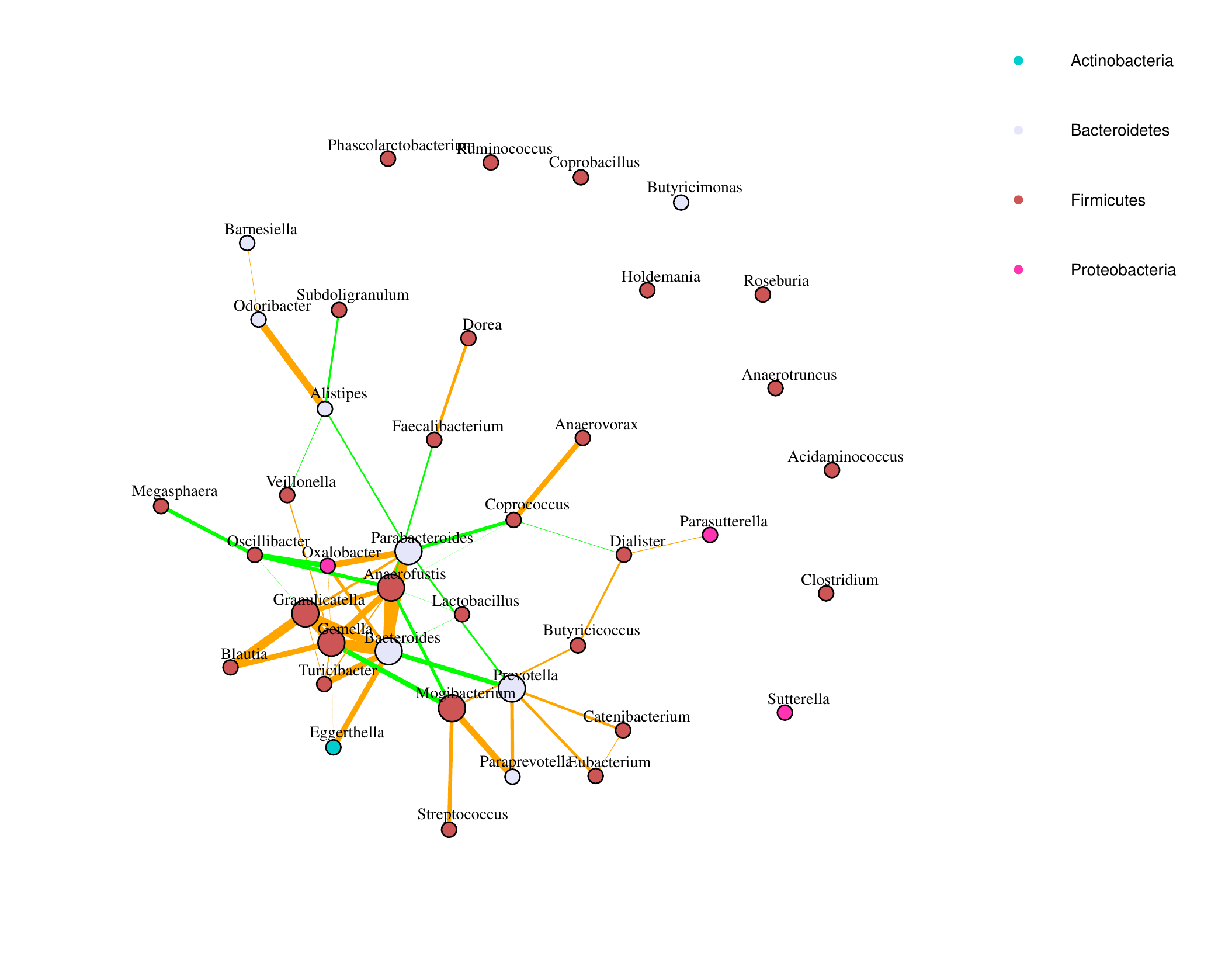}
  \caption{Correlation network identified by the RCEC method for the obese group. Positive correlations are displayed in orange and negative correlations are displayed in green. The size of the nodes indicates the magnitude of degree and the thickness of edges indicates the magnitude of correlations.}\label{fig:realdataobese}
\end{figure}

\begin{table}[!h]
\caption{ Numbers of positive and negative correlations and stability of correlation networks. }  
\renewcommand{\arraystretch}{1.2}
\bigskip
\centering 
\begin{tabular}{ccccccccccc}  
\toprule[2pt]
     & \multicolumn{2}{c}{\textbf{Lean}}  & \multicolumn{3}{c}{\textbf{Obese}} \\
\cmidrule(lr){2-3}  \cmidrule(lr){5-6}
         &  $\widehat{\Omega}_{rcec}$    &  $\widehat{\Omega}_{coat}$     & & $\widehat{\Omega}_{rcec}$           &    $\widehat{\Omega}_{coat}$         \\
\hline  
    \text{Positive Correlations}     &120    &47 &  &34      &79\\
    \text{Negative Correlations}     &154    &61 &  &17      &103\\
    \text{Network stability}         &0.712    &0.584 &  &0.596      &0.563\\

\bottomrule[2pt] 
\end{tabular}
\label{table:realdata}  
\end{table}

\begin{figure}[!h]
  \centering
  \includegraphics[width=15cm, height=9cm]{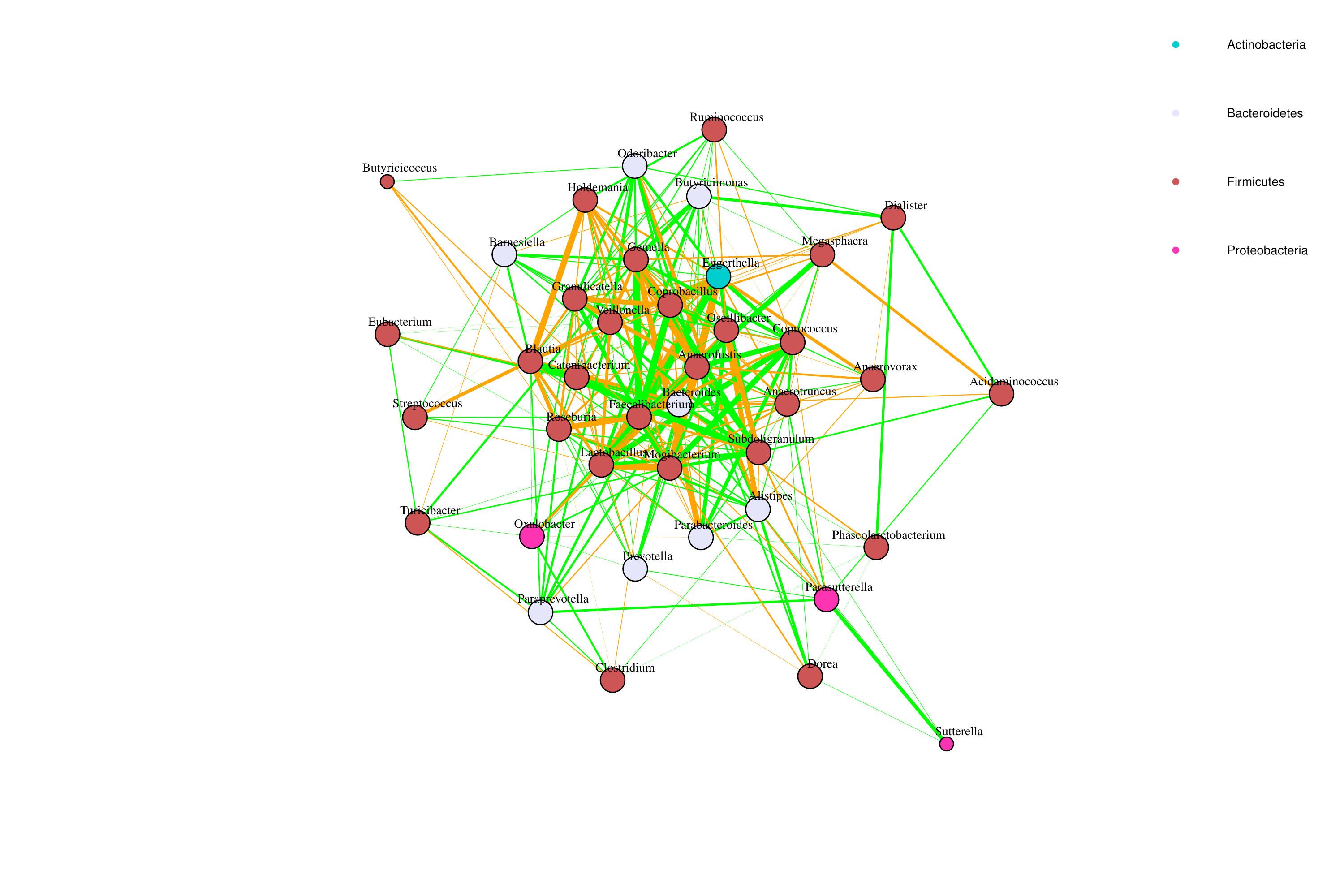}
  \caption{Correlation network identified by the RCEC method for the lean group. Positive correlations are displayed in orange and negative correlations are displayed in green. The size of the nodes indicates the magnitude of degree and the thickness of edges indicates the magnitude of correlations. }\label{fig:realdatalean}
\end{figure}

We applied the RCEC method and the COAT method with the soft thresholding rule to the obese and lean groups, and the tuning parameter $\lambda$ is selected by 5-fold cross validation. The identified edges are denoted as $E_{obese}$ and $E_{lean}$, respectively. We construct a network among the bacterial genera according to the estimated correlation matrix, in which an edge $(i,j)$ represents the correlation between bacteria $i$ and $j$. The stability of support recovery is assessed by the following strategy. We generate 35 bootstrap samples with replacement from the obese group  and 63 bootstrap samples with replacement for the lean group and perform RCEC and COAT procedures on the bootstrapped samples for each group. We repeat the above bootstrap procedure for 100 times. For each edge in  $E_{obese}$ (or $E_{lean}$), we count the times of its occurrences in 100 bootstrap replicates.  The stability of support recovery  is computed as these edges' average proportions of occurrences.     Finally, we  retain the edges in the network  identified by at least 50 bootstrap replicates. Table \ref{table:realdata} displays the numbers of positive and negative correlations and the network stability for the obese and lean groups. We see that the RCEC method achieves higher network stability than the COAT method. There is a bit discrepancy between the results of the COAT method and those derived from \cite{cao2019large}. This may be due to the randomness in cross-validation, the randomness of bootstrapping and in addition, we retained edges in the network  identified by at least 50 bootstrap replicates rather than 80 in their original paper.

The correlation networks identified by the RCEC method for the obese group and lean group  are displayed in Figure  \ref{fig:realdataobese} and Figure \ref{fig:realdatalean}, respectively. From  Figure  \ref{fig:realdataobese} and Figure \ref{fig:realdatalean}, we can clearly see that the correlation networks for the obese and lean groups differ significantly. It can be seen that the lean group shows more complex interactions than the obese group, which further indicates that the obese microbiome is less modular. The finding has been illustrated in literatures and the reason may be the adaption of the microbiome to low-diversity environments \citep{Sharon2012Metagenomic}. In addition, we can see that in the obese group, the gut microbial correlation network tends to have more competitive (or equivalently negative) interactions than cooperative (or equivalently positive) interactions, while the opposite happens in the lean group. The finding is different from that by the COAT method, in which the gut microbial correlation network tends to have more competitive interactions than cooperative ones for both the obese and lean group (see Table \ref{table:realdata}). Maybe the decrease of competitive interactions among the bacterial genera is closely related to the cause of obese and needs to further validated.

\section{Discussion}
A robust covariance matrix estimator for high-dimensional compositional data is proposed in this article, which is shown to enjoy minimax convergence rate in a large class of sparse covariance matrices. In essence, the robustness is achieved by the median of means estimator for the centered log-ratio covariance matrix, which concentrates exponentially fast only under bounded fourth moment condition. Another promising robust estimator would be the Huber's M-estimator \citep{Huber1964Robust}. For i.i.d. copies $V_1,\ldots,V_n$ of a real random variable $V$ with mean $\mu$, Huber's M-estimator of $\mu$ satisfies
$\sum_{i=1}^n\Phi_K(Z_i-\mu)=0$, where $\Phi_K(x)=\min\big\{K,\max(-K,x)\big\}$ is the Huber function. Thus an estimator can be similarly constructed by the Huber's M-estimators for means $E(W_iW_j)$, $EW_i$, $EW_j$. The truncation $K$ is a parameter that trades off bias and robustness and should be carefully dealt with \cite{fan2017estimation,avella2018robust}. We leave this as a future work.


\vspace{2em}

\begin{center}
APPENDIX: PROOFS OF MAIN THEOREMS AND ADDITIONAL SIMULATION RESULTS
\end{center}

\begin{appendices}
\section{Proofs of Main Theorems}
We first present two useful lemmas. Lemma \ref{lemma:LO2011}  is a simplified version of Proposition 1 in \cite{lerasle2011robust}. Lemma \ref{lemma:MOMconcentration} gives the convergence rate of the medians of means estimator $\hat\bGamma$ to $\bGamma$ in terms of element-wise $\ell_\infty$-norm.
\begin{lemma}\label{lemma:LO2011}
Let $Z_1,\ldots,Z_n$ be independent identically distributed random variables with $EZ_1=\bmu$ and $\text{Var}(Z_1)=\sigma^2$. Let $\delta\in(0,1)$ and $M\leq n/2$, and let $\cB=\{B_1,\ldots,B_M\}$ be a regular partition of $\{1,\ldots,n\}$ and $\hat\mu^M$  the median of means estimator of $\bmu$ based on blocks $\cB$. Then if $M\geq \log (\delta^{-1})$, we have that for some constant $K\leq 2(6e)^{1/2}$,
\[
\Pr\left\{\hat\mu^M-\mu\geq K\left(\frac{\sigma^2\log \delta^{-1}}{n}\right)^{1/2}\right\}\leq \delta.
\]
\end{lemma}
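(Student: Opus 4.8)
The plan is to control the one-sided deviation of the median by reducing it to a binomial tail probability. For each block write $d_k=\text{Card}(B_k)$ and let $\bar Z^{(k)}=d_k^{-1}\sum_{i\in B_k}Z_i$ be the corresponding block mean, so that $E\bar Z^{(k)}=\mu$ and $\text{Var}(\bar Z^{(k)})=\sigma^2/d_k$. Since $\cB$ is a regular partition and $M\le n/2$, every block obeys $d_k\ge\lfloor n/M\rfloor\ge n/(2M)$, whence $\text{Var}(\bar Z^{(k)})\le 2\sigma^2M/n$. Fixing a threshold $t>0$ to be calibrated later, I would set
\[
p_0=\max_{1\le k\le M}\Pr\{\bar Z^{(k)}-\mu\ge t\}
\]
and bound it by the one-sided Chebyshev (Cantelli) inequality, $\Pr\{\bar Z^{(k)}-\mu\ge t\}\le \text{Var}(\bar Z^{(k)})/t^2$, giving $p_0\le 2\sigma^2M/(nt^2)$.

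Next I would observe that the event $\{\hat\mu^M-\mu\ge t\}$ forces at least half of the block means to exceed $\mu+t$: if fewer than $M/2$ of the $\bar Z^{(k)}$ were $\ge\mu+t$, then strictly more than $M/2$ would lie below $\mu+t$ and the median would too. Writing $B=\sum_{k=1}^M I(\bar Z^{(k)}-\mu\ge t)$, this yields
\[
\Pr\{\hat\mu^M-\mu\ge t\}\le \Pr\{B\ge M/2\}.
\]
Because the blocks are disjoint and the $Z_i$ are i.i.d., the indicators are independent Bernoulli variables with success probabilities at most $p_0$, so by stochastic domination and the Chernoff bound for binomial upper tails, $\Pr\{B\ge M/2\}\le (4p_0(1-p_0))^{M/2}\le(4p_0)^{M/2}$.

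The remaining step is to calibrate $t$. Choosing $t=K(\sigma^2\log\delta^{-1}/n)^{1/2}$ makes $p_0\le 2M/(K^2\log\delta^{-1})$, which for $M$ of order $\log\delta^{-1}$ (the regime in which the lemma is applied) is a small constant; taking $K=2(6e)^{1/2}$ then forces $4p_0\le e^{-2}$, i.e. $\ln(1/(4p_0))\ge 2$. Consequently $(4p_0)^{M/2}\le e^{-M}$, and since $M\ge\log\delta^{-1}$ we obtain $\Pr\{\hat\mu^M-\mu\ge t\}\le e^{-\log\delta^{-1}}=\delta$, which is the claim.

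I expect the main obstacle to be the bookkeeping of constants rather than any conceptual difficulty. One has to simultaneously (i) keep $p_0$ bounded well away from $1/2$ so that the median-to-binomial reduction is effective, (ii) absorb the mild block-size irregularity permitted by a regular partition (the factor $2$ above), and (iii) match the prescribed deviation $K(\sigma^2\log\delta^{-1}/n)^{1/2}$ using only the hypothesis $M\ge\log\delta^{-1}$, which is exactly what pins down the admissible constant $K\le 2(6e)^{1/2}$. Sharpening the Cantelli step or the binomial tail bound would only improve these constants and would not alter the structure of the argument.
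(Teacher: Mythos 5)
Your strategy is exactly the one behind the result the paper is quoting: the paper offers no proof of this lemma at all, citing it as a simplified version of Proposition 1 of Lerasle and Oliveira (2011), and that proposition is proved by precisely your chain --- block-mean variance $\sigma^2/d_k\le 2\sigma^2 M/n$ for a regular partition with $M\le n/2$, Chebyshev on each block, the reduction ``median deviates $\Rightarrow$ at least $M/2$ block means deviate'', and the binomial Chernoff bound $\Pr\{B\ge M/2\}\le (4p_0(1-p_0))^{M/2}$. All of those steps are correct as you state them, including the factor-of-two bookkeeping for block-size irregularity.

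The genuine problem is your final calibration, and you have only half-acknowledged it. Your bound $p_0\le 2M/(K^2\log\delta^{-1})$ is small only if $M=O(\log\delta^{-1})$, but the hypotheses supply only the lower bound $M\ge\log\delta^{-1}$, and no proof can manufacture the missing upper bound: as literally stated the lemma is false for $M\gg\log\delta^{-1}$. (Take a left-skewed law, e.g.\ $Z=2-E$ with $E$ standard exponential, and $M=n/2$: the median of the size-two block means concentrates at $\mu+c\sigma$ for a fixed $c>0$, while the prescribed deviation $K\sigma\sqrt{\log\delta^{-1}/n}$ tends to zero. The honest general statement has deviation of order $\sigma\sqrt{M/n}$; the paper's ``simplified version'' is valid only in the regime $M\asymp\log\delta^{-1}$ in which it is applied, namely $M=\lceil(2+L)\log p\rceil$ with $\delta$ a negative power of $p$.) Moreover, even granting that regime, your specific constants do not close: demanding $4p_0\le e^{-2}$ with $K^2=24e$ forces $M\le 3e^{-1}\log\delta^{-1}\approx 1.1\log\delta^{-1}$, which need not hold even in the paper's own application. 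The repair is to keep the $M$-dependence in the exponent rather than passing through the crude bound $(4p_0)^{M/2}\le e^{-M}$: what you need is $\tfrac{M}{2}\ln\tfrac{1}{4p_0}\ge\log\delta^{-1}$, i.e., writing $M=c\log\delta^{-1}$, the inequality $\tfrac{c}{2}\ln\tfrac{K^2}{8c}\ge 1$, which with $K=2(6e)^{1/2}$ holds for all $1\le c\le 5$ (at $c=1$ it equals $\tfrac12\ln(3e)\approx 1.05$, which is presumably where the constant $2(6e)^{1/2}$ comes from). So: right architecture, identical to the cited source's proof, but the last step needs the implicit hypothesis $M=O(\log\delta^{-1})$ made explicit --- a defect inherited from the lemma's statement --- and the exact Chernoff exponent retained to cover the paper's actual choice of $M$.
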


\begin{lemma}\label{lemma:MOMconcentration}
For the median of mean estimator $\hat\bGamma=(\hat\gamma^M_{ij})$, we have that for some constant $C>0$,
\[
\Pr\left\{\|\hat\bGamma-\bGamma^0\|_{\max}\geq C\sqrt{\frac{\log p}{n}}\right\}\leq 2p^{-L}(1+p^{-1}).
\]

\end{lemma}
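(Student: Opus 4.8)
The plan is to reduce the entrywise bound to the single-coordinate median-of-means concentration of Lemma \ref{lemma:LO2011} applied to each building block of $\hat\gamma_{ij}^M$, and then to pay an $O(p^2)$ union-bound cost calibrated so that the failure probability collapses to $p^{-L}$. First I would record the centered-log-ratio identity $W_i = Y_i - \bar Y$ with $\bar Y := p^{-1}\sum_{j=1}^p Y_j$, which holds because the normalizing factor $\sum_k Z_k$ cancels between $\log X_i$ and $\log g(\bX)$. Since $EY_j = 0$ throughout this section, this forces $\mu_i := E(W_i) = 0$ and $\gamma_{ij}^0 = E(W_iW_j) =: \mu_{ij}$, so writing $\hat\mu_{ij}^M, \hat\mu_i^M$ for the two medians-of-means in $\hat\gamma_{ij}^M = \hat\mu_{ij}^M - \hat\mu_i^M\hat\mu_j^M$ I obtain the clean decomposition
\[
\hat\gamma_{ij}^M - \gamma_{ij}^0 = (\hat\mu_{ij}^M - \mu_{ij}) - \hat\mu_i^M\hat\mu_j^M,
\]
splitting the error into a main term for $E(W_iW_j)$ and a product term that will be of higher order.

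Next I would establish the variance bounds required to feed into Lemma \ref{lemma:LO2011}. Using $(a+b)^4 \le 8(a^4+b^4)$ together with the convexity estimate $\bar Y^4 \le p^{-1}\sum_j Y_j^4$, \textbf{Assumption A} gives $E(W_i^4) \le 8(EY_i^4 + E\bar Y^4) \le 16\kappa^2$ uniformly in $i$ and $p$; Cauchy--Schwarz then yields $\text{Var}(W_iW_j) \le (EW_i^4\,EW_j^4)^{1/2} \le 16\kappa^2$ and $\text{Var}(W_i) \le (EW_i^4)^{1/2} \le 4\kappa$. I expect this to be the only delicate point, and hence the main obstacle: the geometric-mean term $\bar Y$ couples all coordinates, and the whole argument rests on controlling it \emph{uniformly in the dimension}. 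The convexity inequality is exactly what keeps $E\bar Y^4$ bounded by $\kappa^2$ regardless of $p$, so that a mere finite fourth moment on each $Y_j$ suffices and no sub-Gaussian tail is needed.

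With the variances in hand I would invoke Lemma \ref{lemma:LO2011} at confidence level $\delta = p^{-(2+L)}$, applied to the i.i.d.\ summands $\{W_{k,i}W_{k,j}\}_k$ and $\{W_{k,i}\}_k$ and to their negatives so as to upgrade the one-sided bound to a two-sided one. The choice $M = \lceil (2+L)\log p\rceil$ satisfies both $M \le n/2$ (since $\log p = o(n)$ by \textbf{Assumption B}) and $\log\delta^{-1} = (2+L)\log p \le M$, so the hypotheses hold; substituting the variance bounds produces constants $c_1, c_2$ with
\[
\Pr\left\{|\hat\mu_{ij}^M - \mu_{ij}| \ge c_1\sqrt{\tfrac{\log p}{n}}\right\} \le 2p^{-(2+L)}, \qquad \Pr\left\{|\hat\mu_i^M| \ge c_2\sqrt{\tfrac{\log p}{n}}\right\} \le 2p^{-(2+L)},
\]
where I used $\mu_i = 0$ in the second bound.

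Finally I would assemble the result by separate union bounds. The first inequality, over all $p^2$ ordered pairs $(i,j)$, costs $2p^2\cdot p^{-(2+L)} = 2p^{-L}$, and the second, over all $p$ indices $i$, costs $2p\cdot p^{-(2+L)} = 2p^{-(1+L)}$. On the intersection of the complementary events every $|\hat\mu_i^M| < c_2\sqrt{\log p/n}$, so every product obeys $|\hat\mu_i^M\hat\mu_j^M| < c_2^2\,\log p/n = o(\sqrt{\log p/n})$ because $\log p = o(n)$; combined with the decomposition above this gives $\|\hat\bGamma - \bGamma^0\|_{\max} \le c_1\sqrt{\log p/n} + c_2^2\log p/n \le C\sqrt{\log p/n}$ for $n$ large. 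Adding the two union-bound costs yields the stated failure probability $2p^{-L} + 2p^{-(1+L)} = 2p^{-L}(1+p^{-1})$.
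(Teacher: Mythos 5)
Your proof is correct and follows essentially the same route as the paper's: you decompose $\hat\gamma_{ij}^M-\gamma_{ij}^0$ into the median-of-means error for $E(W_iW_j)$ plus the product term, apply Lemma \ref{lemma:LO2011} at level $\delta=p^{-(2+L)}$ (two-sided via the negatives), and union-bound over the $p^2$ pairs and $p$ coordinates to arrive at exactly $2p^{-L}(1+p^{-1})$. If anything, you are more complete than the paper, which leaves implicit the uniform-in-$p$ bounds $\mathrm{Var}(W_iW_j)\le 16\kappa^2$ and $\mathrm{Var}(W_i)\le 4\kappa$ that you derive from \textbf{Assumption A} via $\bar Y^4\le p^{-1}\sum_{j}Y_j^4$, and which carries the $EW_i$ cross terms through the product bound rather than using the centering $EW_i=0$ as you do.
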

\begin{proof}
Recall that $\hat\gamma^M_{ij}=\hat\mu^M_{ij}-\hat\mu_{i}^M\hat\mu_{j}^M$. By Lemma \ref{lemma:LO2011}, we have that for all $i,j\in\{1,\ldots,p\}$,
\[
\Pr\left\{\bmax_i|\hat\mu_{i}^M-EW_i|\geq K\sqrt{\frac{\gamma_{ii}^0\log p}{n}}\right\}\leq 2p^{-(1+L)},
\]
and
\[
\Pr\left\{\bmax_{i,j}|\hat\mu_{ij}^M-E(W_iW_j)|\geq K\sqrt{\frac{\text{Var}(W_iW_j)\log p}{n}}\right\}\leq 2p^{-L}.
\]
Thus with probability at least $1-2p^{-(1+L)}$,  we have
\[
\begin{split}
\bmax_{i,j}|\hat\mu_{i}^M\hat\mu_{j}^M-EW_iEW_j|&\leq\bmax_{i,j}|\hat\mu_{i}^M\hat\mu_{j}^M-EW_i\hat\mu_{j}^M|
+\bmax_{i,j}|EW_i\hat\mu_{j}^M-EW_iEW_j|\\
&\leq \bmax_{i,j}|\hat\mu_{i}^M-EW_i|\Big(|\hat\mu_{j}^M-EW_j|+|EW_j|\Big)+\bmax_{i,j}|EW_i||\hat\mu_{j}^M-EW_j|\\
&\leq 2K\bmax_{i,j}|EW_i|\sqrt{\frac{\gamma_{jj}^0\log p}{n}}+K^2\bmax_{i}\frac{\gamma_{ii}^0\log p}{n}.
\end{split}
\]
Furthermore,
\[
\begin{split}
\Pr\left\{\bmax_{ij}|\hat\gamma^M_{ij}-\gamma_{ij}^0|\geq C\sqrt{\frac{\log p}{n}}\right\}=&\Pr\left\{\bmax_{ij}\Big|(\hat\mu^M_{ij}-E(W_iW_j))+(EW_iEW_j-\hat\mu_{i}^M\hat\mu_{j}^M)\Big|\geq C\sqrt{\frac{\log p}{n}}\right\} \\
\leq &\Pr\left\{\bmax_{ij}\Big|(\hat\mu^M_{ij}-E(W_iW_j))\Big|+\bmax_{i,j}\Big|(EW_iEW_j-\hat\mu_{i}^M\hat\mu_{j}^M)\Big|\geq C\sqrt{\frac{\log p}{n}}\right\}\\
\leq &\Pr\left\{\bmax_{ij}\Big|(\hat\mu^M_{ij}-E(W_iW_j))\Big|\geq \frac{C}{2}\sqrt{\frac{\log p}{n}}\right\}\\
&+\Pr\left\{\bmax_{i,j}\Big|(EW_iEW_j-\hat\mu_{i}^M\hat\mu_{j}^M)\Big|\geq C\sqrt{\frac{\log p}{n}}\right\}\\
\leq &2p^{-L}(1+p^{-1}),
\end{split}
\]
which concludes the result in Lemma \ref{lemma:MOMconcentration}.
\end{proof}

\section*{Proof of Theorem \ref{theorem:1}}
\begin{proof}
By \textbf{Assumption B}, we have that
\[
\|\bOmega_0\|_1=\bmax_i\sum_{j=1}^p|\omega_{ij}^0|=\bmax_i\sum_{j=1}^p|\omega_{ij}^0|^{1-q}|\omega_{ij}^0|^q
\leq\bmax_i\sum_{j=1}^p (\omega_{ii}^0\omega_{jj}^0)^{(1-q)/2}|\omega_{ij}^0|^q \leq M^{1-q}s_0(p).
\]
Thus further by Lemma \ref{lemma:approximation}, there exists a constant $C_1$ such that
\[
\|\bOmega_0-\bGamma_0\|_{\max}\leq 3p^{-1}M^{1-q}s_0(p)\leq C_1\sqrt{\frac{\log p}{n}}.
\]
Thus by Lemma \ref{lemma:MOMconcentration}, we have that for sufficiently large $C$,
\[
\begin{split}
\Pr\left\{\|\hat\bGamma-\bOmega_0\|_{\max}\geq C\sqrt{\frac{\log p}{n}}\right\}\leq &\Pr\left\{\|\hat\bGamma-\bGamma_0\|_{\max}+\|\bOmega_0-\bGamma_0\|_{\max}\geq C\sqrt{\frac{\log p}{n}}\right\}\\
\leq & \Pr\left\{\|\hat\bGamma-\bGamma_0\|_{\max}\geq C/2\sqrt{\frac{\log p}{n}}\right\}\\
\leq & 2p^{-L}(1+p^{-1}),
\end{split}
\]
where the last inequality is by Lemma \ref{lemma:MOMconcentration}, which concludes the theorem.
\end{proof}
\section*{Proof of Theorem \ref{theorem:2}}
\begin{proof}
Define two events as
\[
E_1=\left\{|\hat\gamma^M_{ij}-\omega^0_{ij}|\leq \lambda_{ij} \ \ \text{for all}\ \ i,j\right\},  \ \ E_2=\left\{\hat\gamma^M_{ii}\hat\gamma^M_{jj}\leq 2\omega_{ii}^0\omega_{jj}^0\ \ \text{for all}\ \ i,j\right\}.
\]
 We divide the proof of  Theorem \ref{theorem:2} into two steps. In the first step, we show that on the event $E_1\bigcap E_2$, we have
 \[
 \|\hat\bOmega-\bOmega_0\|_2\leq C_0s_0(p)\left(\frac{\log p}{n}\right)^{(1-q)/2},
 \]
where $C_0$ is a constant depending on $q$ and $\lambda$. In the second step, we show that for a positive deterministic sequence $\epsilon_{n,p}$ which converge to zero when $(\log p)/n\rightarrow0$, we have that $\Pr\{E_1\bigcap E_2\}\geq 1-\epsilon_{n,p}$.

\textbf{Step I:} First by the properties of the thresholding functions $\tau_{\lambda}(\cdot)$, we have
\[
\begin{split}
\sum_{j=1}^p|\tau_{\lambda_{ij}}(\hat{\gamma}_{ij}^M)-\omega_{ij}^0|=&\sum_{j=1}^p|\tau_{\lambda_{ij}}(\hat{\gamma}_{ij}^M)-\omega_{ij}^0|I(|\hat\gamma_{ij}^M|\geq \lambda_{ij})+\sum_{j=1}^p|\omega_{ij}^0|I(|\hat{\gamma}_{ij}^M|<\lambda_{ij})\\
\leq&\underbrace{2\sum_{j=1}^p\lambda_{ij}I(|\omega_{ij}^0|\geq \lambda_{ij})}_{L_1}+\underbrace{\sum_{j=1}^p|\tau_{\lambda_{ij}}(\hat{\gamma}_{ij}^M)-\omega_{ij}^0|I(|\hat\gamma^M|\geq\lambda_{ij})}_{L_2}\\
&+\underbrace{\sum_{j=1}^q|\omega_{ij}^0|I(|\hat\gamma^M_{ij}|<\lambda_{ij})}_{L_3}.\\
\end{split}
\]
On the event $E_1$, by property (i) of the thresholding function, we have
\[
L_2\leq 2\sum_{j=1}^p|\omega_{ij}^0|I(|\omega_{ij}^0|<\lambda_{ij}),
\]
and by triangular inequality, we have that
\[
L_3\leq \sum_{j=1}^q|\omega_{ij}^0|I(|\omega_{ij}^0|<2\lambda_{ij}).
\]
Combining the above inequalities, on the event $E_2$, we have
\[
\begin{split}
\sum_{j=1}^p|\tau_{\lambda_{ij}}(\hat{\gamma}_{ij}^M)-\omega_{ij}^0|\leq & (4+2^{1-q})\sum_{j=1}^p\lambda_{ij}^{1-q}|\omega_{ij}^0|^q\\
\leq & C_0s_0(p)\left(\frac{\log p}{n}\right)^{(1-q)/2}.
\end{split}
\]
Further by $\|\Ab\|\leq \|\Ab\|_1$ for any symmetric matrix $\Ab$,  it only remains to show the result in Step II.
\end{proof}

\textbf{Step II:}  By \textbf{Assumption C}, there exists $\zeta>0$ such that $\min_{i}\omega_{ii}>\zeta$. Notice that

\begin{equation}\label{equ:S222}
\hat\gamma^M_{ii}\hat\gamma^M_{jj}=\omega_{ii}^0\omega_{jj}^0+(\hat\gamma^M_{ii}-\omega_{ii}^0)\hat\gamma^M_{jj}
+(\hat\gamma^M_{jj}-\omega_{jj}^0)\hat\gamma^M_{ii}-(\hat\gamma^M_{ii}-\omega_{ii}^0)(\hat\gamma^M_{jj}-\omega_{jj}^0).
\end{equation}

In Theorem \ref{theorem:1}, we have obtained that
\[
\Pr\left\{\|\hat\bGamma-\bOmega_0\|_{\max}\leq C\sqrt{\frac{\log p}{n}}\right\}\geq 1-\epsilon_{n,p},
\]
therefore, for large enough $n$,
\[
\Pr\left\{\hat \gamma_{ii}^M\hat\gamma_{jj}^M\geq \frac{\zeta^2}{2} \ \ \text{for all} \ \ i,j\in\{1,\ldots,p\}\right\}\geq 1-\frac{\epsilon_{n,p}}{4}.
\]
Then we have

\begin{equation}\label{equ:S1}
\begin{split}
\Pr\left\{\bmax_{i,j}\frac{|\hat{\gamma}_{ij}^M-\omega_{ij}^0|}{\sqrt{\hat\gamma_{ii}^M\hat\gamma_{jj}^M}}\geq \lambda\sqrt{\frac{\log p}{n}}\right\}
\leq& \Pr\left\{\bmax_{i,j}{|\hat{\gamma}_{ij}^M-\omega_{ij}^0|} \geq \lambda\sqrt{\frac{\bmin_{ij}(\hat\gamma_{ii}^M\hat\gamma_{jj}^M)\log p}{n}}\right\}\\
\leq& \Pr\left\{\bmax_{i,j}{|\hat{\gamma}_{ij}^M-\omega_{ij}^0|} \geq \lambda\sqrt{\frac{\zeta^2\log p}{2n}}\right\}+\frac{\epsilon_{n,p}}{4}\leq \frac{\epsilon_{n,p}}{2}.
\end{split}
\end{equation}
Thus for large enough $n$, $\Pr(E_2)\geq 1-\epsilon_{n,p}/2$, and thus
\[
\Pr\{E_1\cap E_2\}\geq \Pr(E_1\cap E_2|E_2)\Pr(E_2)\geq 1-\epsilon_{n,p}.
\]
By the results in Step I and Step II, we can finally get
\[
\inf_{\bOmega\in\cU(q,s_0(p),M)}\Pr\left\{\|\hat\bOmega-\bOmega_0\|_2\leq Cs_0(p)\left(\frac{\log p}{n}\right)^{(1-q)/2}\right\}\geq 1-\epsilon_{n,p},
\]
which concludes the theorem.

\section*{Proof of Theorem \ref{theorem:3}}
By the property (ii) of the thresholding function $\tau_\lambda(\cdot)$, and the result in (\ref{equ:S1}), we have that as $n\rightarrow\infty$
\[
\begin{split}
\Pr\{\hat\omega_{ij}\neq 0,\omega_{ij}^0=0 \ \text{for some} \ i,j\}\leq& \Pr\left\{\bmax_{i,j}|\hat\gamma^M_{ij}-\omega_{ij}^0|\geq \lambda_{ij}\right\} \\
=&\Pr\left\{\bmax_{i,j}\frac{|\hat{\gamma}_{ij}^M-\omega_{ij}^0|}{\sqrt{\hat\gamma_{ii}^M\hat\gamma_{jj}^M}}\geq\lambda\sqrt{\frac{\log p}{n}}\right\}\\
\leq &\frac{\epsilon_{n,p}}{2}\rightarrow0,
\end{split}
\]
which concludes the result in (\ref{equ:theorem31}).

By (\ref{equ:S222}) and Theorem \ref{theorem:2}, we have with probability at least $1-\epsilon_{n,p}$,
\[
|\hat\gamma^M_{ii}\hat\gamma^M_{jj}-\omega_{ii}^0\omega_{jj}^0|\leq {\frac{3}{4}}{\zeta^2}.
\]
Thus, with probability at least $1-\epsilon_{n,p}$,
\[
\left|\sqrt{\hat\gamma^M_{ii}\hat\gamma^M_{jj}}-\sqrt{\omega_{ii}^0\omega_{jj}^0}\right|
=\frac{|\hat\gamma^M_{ii}\hat\gamma^M_{jj}-\omega_{ii}^0\omega_{jj}^0|}{\sqrt{\hat\gamma^M_{ii}\hat\gamma^M_{jj}}
+\sqrt{\omega_{ii}^0\omega_{jj}^0}}\leq \frac{\frac{3}{4}\zeta^2}{\zeta+\frac{\zeta}{2}}=\frac{1}{2}\zeta.
\]
By the property (iii) of the thresholding function $\tau_{\lambda}(\cdot)$,
\[
\Pr\left\{\text{sgn}(\hat\omega_{ij})\neq\text{sgn}(\omega_{ij}^0), \omega_{ij}^0\neq 0 \ \text{for some}\ (i,j)\right\}\leq \Pr\left\{|\hat\gamma^M_{ij}-\omega_{ij}^0|\geq |\omega_{ij}^0|-\lambda_{ij} \ \text{for some} \ i,j\right\},
\]
and
\[
\begin{split}
|\omega_{ij}^0|-\lambda_{ij}\geq& C\sqrt{\frac{\log p}{n}}\sqrt{\omega_{ii}^0\omega^0_{jj}}-\lambda\sqrt{\frac{\log p}{n}}\left(\sqrt{\hat\gamma^M_{ii}\hat\gamma^M_{jj}}-\sqrt{\omega_{ii}^0\omega_{jj}^0}+\sqrt{\omega_{ii}^0\omega_{jj}^0}\right)\\
\geq&(C-\lambda)\sqrt{\frac{\log p}{n}}\zeta-\lambda\sqrt{\frac{\log p}{n}}\frac{\zeta}{2}=(C-\frac{3}{2}\lambda)\sqrt{\frac{\log p}{n}}\zeta
\end{split}
\]
for all $(i,j)\in\{1,\ldots,p\}$. Further by Theorem \ref{theorem:1}, we yield
\[
\Pr\left\{\text{sgn}(\hat\omega_{ij})\neq\text{sgn}(\omega_{ij}^0), \omega_{ij}^0\neq 0 \ \text{for some}\ (i,j)\right\}\leq \epsilon_{n,p},
\]
which, together with (\ref{equ:theorem31}), concludes the result.

\clearpage

\section{Simulation Results for Case 3 and Case 4}
\begin{table}[ht]
\caption{Simulation results for Case 3, the values in the parenthesis are the standard errors.}  
\bigskip
\centering 
\begin{tabular}{c c c c c c c c}  
\toprule[2pt]
   $p$  &  $\widehat{\Omega}_{rcec}$  & $\widehat{\Omega}_{oracle}$ & $\widehat{\Omega}_{coat}$ & & $\widehat{\Omega}_{cclasso}$           &$\widehat{\Omega}_{sparcc}$ &   $\widehat{\Omega}_{rebacca}$       \\ [0.5ex]
\hline  
          &   \multicolumn{7}{c}{\textbf{Matrix $L_1$ norm loss}}\\[0.5ex]
      50  &8.479(0.575)  &8.304(0.598)  &10.106(3.760) &  &13.412(5.861) &68.682(48.043) &21.207(13.489)\\
     100  &7.988(0.375)  &8.267(0.675)  &12.482(10.575) &  &22.796(49.184) &185.347(238.630) &44.659(66.531)\\
     200  &8.140(0.350)  &8.278(0.377)  &12.451(5.170) &  &19.007(9.491) &338.911(233.261) &75.528(47.222)\\
          &   \multicolumn{7}{c}{\textbf{Spectral norm loss}}\\[0.5ex]
      50  &6.562(0.591)  &6.367(0.793)  &8.350(4.085) &  &13.681(4.659) &34.699(24.697) &18.143(7.575)\\
     100  &6.287(0.608)  &6.406(0.836)  &10.982(10.889) &  &17.236(18.035) &76.923(100.631) &27.968(29.828)\\
     200  &6.382(0.579)  &6.493(0.473)  &11.112(5.654) &  &16.274(6.085) &123.675(93.907) &35.512(16.766)\\
          &    \multicolumn{7}{c}{\textbf{Matrix Fronbenius norm loss}}\\[0.5ex]
      50  &13.984(2.466) &14.340(2.661) &22.054(8.668)&  &24.560(9.708) &45.674(24.209) &30.143(12.106)\\
     100  &18.789(3.103) &19.075(2.954) &34.592(21.669)&  &37.206(25.768) &103.251(96.762) &54.998(41.144)\\
     200  &25.621(3.244) &25.741(3.239) &46.755(16.341)&  &49.241(17.392) &182.748(89.221) &90.703(30.824)\\
          &    \multicolumn{7}{c}{\textbf{True positive rate}}\\[0.5ex]
      50  &0.500(0.055)  &0.486(0.052)  &0.645(0.078) &  &0.641(0.238) &1.000(0.000) &0.312(0.080)\\
     100  &0.508(0.040)  &0.493(0.041)  &0.620(0.095) &  &0.529(0.195) &1.000(0.000) &0.436(0.112)\\
     200  &0.507(0.032)  &0.497(0.033)  &0.633(0.077)&  &0.479(0.212) &1.000(0.000) &0.492(0.093)\\
          &   \multicolumn{7}{c}{\textbf{False positive rate}}\\[0.5ex]
      50  &0.041(0.022)  &0.022(0.017)  &0.135(0.051) &  &0.223(0.097) &1.000(0.000) &0.029(0.009)\\
     100  &0.010(0.006)  &0.007(0.005)  &0.065(0.028) &  &0.061(0.029) &1.000(0.000) &0.022(0.005)\\
     200  &0.003(0.001)  &0.003(0.001)  &0.027(0.009)&  &0.019(0.010) &1.000(0.000) &0.024(0.004)\\
\bottomrule[2pt]
\end{tabular}
\label{table:Case3}  
\end{table}

\begin{table}[ht]
\caption{Simulation results for Case 4, the values in the parenthesis are the standard errors.}  
\bigskip
\centering 
\begin{tabular}{c c c c c c c c}  
\toprule[2pt]
   $p$  &  $\widehat{\Omega}_{rcec}$  & $\widehat{\Omega}_{oracle}$ & $\widehat{\Omega}_{coat}$ & & $\widehat{\Omega}_{cclasso}$           &$\widehat{\Omega}_{sparcc}$ &   $\widehat{\Omega}_{rebacca}$       \\ [0.5ex]
\hline  
          &   \multicolumn{7}{c}{\textbf{Matrix $L_1$ norm loss}}\\[0.5ex]
      50  &8.386(0.480)  &8.205(0.621)  &10.589(5.680) &  &13.845(9.293) &79.065(73.171) &21.707(12.012)\\
     100  &8.183(0.363)  &8.312(0.540)  &11.349(5.454) &  &16.523(15.685) &162.531(134.155) &38.722(24.523)\\
    200  &8.344(0.339)  &8.437(0.332)  &12.751(6.346) &  &20.318(17.890) &352.603(255.300) &74.738(59.750)\\
          &   \multicolumn{7}{c}{\textbf{Spectral norm loss}}\\[0.5ex]
      50  &6.510(0.474)  &6.290(0.685)  &8.995(6.100) &  &14.450(7.173) &39.246(32.827) &19.032(8.702)\\
     100  &6.541(0.648)  &6.505(0.689)  &9.980(5.863) &  &15.508(8.370) &69.609(63.409) &25.306(24.523)\\
    200  &6.680(0.462)  &6.643(0.453)  &11.432(6.851) &  &16.749(7.564) &127.982(98.690) &36.044(19.331)\\
          &    \multicolumn{7}{c}{\textbf{Matrix Fronbenius norm loss}}\\[0.5ex]
      50  &13.285(2.251)  &13.526(2.450)  &22.257(10.201) &  &25.055(11.637) &49.287(31.515) &30.433(12.963)\\
     100  &18.145(3.134) &18.575(3.226) &31.771(13.214)&  &34.179(15.189) &93.807(59.088) &50.869(21.895)\\
    200  &24.000(3.700)  &24.305(3.567)  &44.124(16.654) &  &46.803(17.890) &180.864(90.821) &87.930(30.060)\\
          &    \multicolumn{7}{c}{\textbf{True positive rate}}\\[0.5ex]
      50  &0.483(0.060)  &0.478(0.062)  &0.594(0.114) &  &0.522(0.275) &1.000(0.000) &0.287(0.092)\\
     100  &0.489(0.046)  &0.485(0.044)  &0.624(0.068) &  &0.493(0.211) &1.000(0.000) &0.426(0.093)\\
    200  &0.497(0.033)  &0.487(0.036)  &0.619(0.073) &  &0.432(0.237) &1.000(0.000) &0.482(0.089)\\
          &   \multicolumn{7}{c}{\textbf{False positive rate}}\\[0.5ex]
      50  &0.038(0.023)  &0.017(0.014)  &0.114(0.053) &  &0.183(0.097) &1.000(0.000) &0.026(0.008)\\
     100  &0.010(0.005)  &0.007(0.004)  &0.062(0.025) &  &0.059(0.033) &1.000(0.000) &0.024(0.005)\\
    200  &0.003(0.001)  &0.003(0.001)  &0.026(0.008) &  &0.178(0.010) &1.000(0.000) &0.023(0.004)\\
\bottomrule[2pt]
\end{tabular}
\label{table:Case4}  
\end{table}

\end{appendices}

\clearpage

\section*{Acknowledgements}
Yong He's research is partially supported by the grant of the National Science Foundation of China (NSFC 11801316), Natural Science Foundation of Shandong Province (ZR2019QA002). Xinsheng Zhang's work is partially supported by the grant of the National Science Foundation of China (NSFC 11971116). The authors would like to thank Dr. Yuanpei Cao for providing the real dataset analyzed in this article.
\bibliographystyle{model2-names}
\bibliography{Ref}

\end{document}